\documentclass[final,twoside,11pt]{entics}
\pdfoutput=1
\usepackage{amsmath, amssymb, amscd, tikz, tikz-cd, mathtools, enumitem}
\usepackage{mathrsfs, bm, stmaryrd, comment, url}
\usepackage{graphicx, hyperref}
\usepackage{enticsmacro}

\mathtoolsset{showonlyrefs=true}
\usetikzlibrary{shapes.geometric, trees}

\newcommand{\sem}[1]{\llbracket #1 \rrbracket}

\newcommand{\lrangle}[1]{\langle #1 \rangle}

\newcommand{\vsp}{}
\renewcommand{\-}{\mathchar`-}
\newcommand{\cat}{\mathit{Sub}}
\newcommand{\Nat}{\mathbb{N}}
\newcommand{\mathword}[1]{\mathit{#1}}
\newcommand{\mathope}[1]{\mathit{#1}}

\makeatletter

\@addtoreset{figure}{section}
\makeatother

\bibliographystyle{entics}

\volume{1}			


\begin{document}

\begin{frontmatter}
\title{Continuous Functions on Final Comodels\\ of Free Algebraic Theories}
\author{Tomoya Yoshida\thanksref{myemail}}
\address{Research Institute for Mathematical Sciences\\ Kyoto University\\ Kyoto, Japan}
\thanks[myemail]{Email: \href{mailto:ytomoya@kurims.kyoto-u.ac.jp} {\texttt{\normalshape
ytomoya@kurims.kyoto-u.ac.jp}}}

\begin{abstract}
    In 2009, Ghani, Hancock and Pattinson
    gave a tree-like representation of
    stream processors $A^{\Nat} \rightarrow B^{\Nat}$.
    In 2021, Garner showed that
    this representation can be established in terms of {\it algebraic theory} and {\it comodels}:
    the set of infinite streams $A^{\Nat}$ is the {\it final comodel} of
    the {\it algebraic theory of $A$-valued input} $\mathbb{T}_A$
    and the set of stream processors $\mathword{Top}(A^{\Nat},B^{\Nat})$
    can be seen as the {\it final $\mathbb{T}_A$-$\mathbb{T}_B$-\it bimodel}.
    In this paper, we generalize Garner's results to the case of {\it free algebraic theories}.
\end{abstract}
\end{frontmatter}

\section{Introduction}
\label{sec:1}

Writing and verifying programs handling infinite objects
such as {\it streams} and infinite trees are highly non-trivial tasks.
To ease it, many attempts to identify the mathematical principles behind
infinite computation on infinite data structures have been made.
Among them, a most active and well-developed areas are the theory of coalgebras,
in which infinite objects are captured as elements of final coalgebras
which enjoy nice universality and useful principles for programming and verification.
This work can be seen as a contribution in this direction,
making use of the recent development of algebraic theories and their comodels.

\subsection{Background and Our Result}

In computer science, coalgebras for functors appear in various ways
\cite{jacobs_2016,Jacobs1997ATO,ruttenUniversalCoalgebraTheory2000};
one is as the way of implementations of infinite data structures,
which appear as elements of final coalgebras.
The simplest example of infinite data structures is
infinite sequences (often called streams).
For a set $A$,
the set $A^{\Nat}$ of $A$-valued streams is a coalgebra for an endofunctor $A \times (\_)$ on $Set$
with a coalgebra structure map $\alpha:A^{\Nat} \rightarrow A \times A^{\Nat}$ given by
\begin{equation}
    a_0a_1a_2 \cdots \mapsto (a_0, a_1a_2 \cdots).
\end{equation}
Moreover, as is well known, the coalgebra $(A^{\Nat},\alpha)$ is
indeed the final coalgebra for the functor $A \times (\_)$,
i.e., for every $(A \times (\_))$-coalgebra $(X,\phi:X \rightarrow A \times X)$,
there exists a unique $(A \times (\_))$-coalgebra homomorphism $(X,\phi) \rightarrow (A^{\Nat},\alpha)$.
For an endofunctor $F$, we write $\nu X.F(X)$ for the final $F$-coalgebra when it exists.
Then, $A^{\Nat}$ can be denoted as $\nu X.A \times X$.

Here, we are interested in functions that translate infinite data to infinite data,
e.g., functions from $A$-streams to $B$-streams ($A$ and $B$ are sets).
One might say stream processors to mean all functions $A^{\Nat} \rightarrow B^{\Nat}$,
but we prefer to define stream processors as
{\it productive} functions $A^{\Nat} \rightarrow B^{\Nat}$.
Productivity of a map $f:A^{\Nat} \rightarrow B^{\Nat}$ in this context means that
$f$ will read only finite information of a input stream $\overrightarrow{a}$
to decide finite information of the output stream $f(\overrightarrow{a})$.
This constraint is reasonable from a computational perspective
because programs can read only finite information of infinite data.
Thus if a program uses the stream $f(\overrightarrow{a})$
then it in fact uses a finite segment of this stream
and the finite segment of $f(\overrightarrow{a})$
should be computed using finite information of $\overrightarrow{a}$.

As described in \cite{DBLP:journals/corr/abs-0905-4813,garnerStreamProcessorsComodels2021},
this productivity can be characterized as the continuity of functions.
The set $A^{\Nat}$ has a natural topology,
the product topology of discrete spaces $A$,
and the set of productive maps $A^{\Nat} \rightarrow B^{\Nat}$ coincides with
the set of continuous maps $A^{\Nat} \rightarrow B^{\Nat}$.
Though this characterization may be mathematically elegant,
by computational requirements, we want a coalgebraic characterization of these continuous functions.

There already have been several studies on coalgebraic representation of
continuous functions between final coalgebras
\cite{DBLP:journals/corr/abs-0905-4813,article,10.1145/3064899.3064900}.
In particular, Ghani, Hancock and Pattinson \cite{DBLP:journals/corr/abs-0905-4813}
studied stream processors $A^{\Nat} \rightarrow B^{\Nat}$ and
showed that they can be represented as elements (trees) of another final coalgebra
$\nu X.T_A(B \times X)$,
where $T_A(V)$ is the set of finite-depth $A$-ary branching trees with $V$-labelled leaves.
Let $I_{AB} \coloneqq \nu X.T_A(B \times X)$.
Then each element of $I_{AB}$ can be seen (coinductively) as a tree in $T_A(B \times I_{AB})$.
For example, when $A=\{0,1\}$, the following tree $t$ belongs to $I_{AB}$
\[
    \begin{tikzpicture}[auto]
        \node (1) at (-2,-1) {$(b_1,t_1)$};
        \node (2) at (1,-2) {$(b_2,t_2)$};
        \node (3) at (3,-2) {$(b_3,t_3),$};
        \draw (0,0) --node[swap] {0} (1);
        \draw (0,0) --node {1} (2,-1) --node {1} (3);
        \draw (2,-1) --node[swap] {0} (2);
    \end{tikzpicture}
\]
where $b_i \in B$ and $t_i$'s are trees in $I_{AB}$ ($i=1,2,3$).
Here we give an overview of how $t$ expresses a function $\{0,1\}^{\Nat} \rightarrow B^{\Nat}$.
For a given stream $a_0a_1 \cdots \in \{0,1\}^{\Nat}$,
this tree $t$ outputs a $B$-valued stream stepwise as follows.
First, $t$ consumes $a_0$.
If $a_0=0$, it reaches the leaf $(b_1,t_1)$.
Thus it outputs $b_1$ and then the computation continues in the same way
with $t_1$ and $a_1a_2 \cdots$ (instead of $t$ and $a_0a_1 \cdots$).
On the other hand, if $a_0=1$, it does not reach leaves.
Therefore $t$ also consumes $a_1$ and reaches the leaf $(b_2,t_2)$ or $(b_3,t_3)$
if $a_1$ is $0$ or $1$, respectively.
Then the first output is $b_2$ or $b_3$ and computation continues
with either $t_2$ and $a_2a_3 \cdots$ or $t_3$ and $a_2a_3 \cdots$.
Eventually, $t$ will give a continuous map $\{0,1\}^{\Nat} \rightarrow B^{\Nat}$
because the above procedure produces each digit of the resulting stream
using a finite initial segment of the input stream.
However, this representation is {\it not bijective} in the sense that
different trees may give the same function.
For instance, consider trees $t$ and $t'$ such that
\[
    t=
    \begin{tikzpicture}[auto, baseline=-25pt]
        \node (l) at (-1,-1) {$(b,t)$};
        \node (r) at (1,-1) {$(b,t),$};
        \draw (0,0) --node[swap] {0} (l);
        \draw (0,0) --node {1} (r);
    \end{tikzpicture}
    \qquad
    t'=(b,t').
\]
Then both of them give a constant map which sends every stream to $bbb \cdots$
(in particular, the tree $t'$, which has no branches, does not consume elements to output the first digit $b$
and thus it does not consume an input stream at all to output the result $bbb \cdots$).

Garner \cite{garnerStreamProcessorsComodels2021} paid attention to this non-bijectivity
and he gave a bijective characterization of stream processors
in terms of {\it algebraic theories} and their {\it comodels}.
The set $A^{\Nat}$ arises as the {\it final comodel} of
the algebraic theory of {\it $A$-valued input} $\mathbb{T}_A$
and the set of stream processors $A^{\Nat} \rightarrow B^{\Nat}$ appears as
the {\it final $\mathbb{T}_A$-$\mathbb{T}_B$-bimodel}.
He also gave a comodel-theoretic expression of the result of \cite{DBLP:journals/corr/abs-0905-4813};
the final coalgebra $\nu X.T_{A}(B \times X)$ is the {\it final $\mathbb{T}_A$-residual $\mathbb{T}_B$-comodel}
and, because of their finalities, there are canonical maps between
the final $\mathbb{T}_A$-residual $\mathbb{T}_B$-comodel and the final $\mathbb{T}_A$-$\mathbb{T}_B$-bimodel
such that their composition is the identity function on the final $\mathbb{T}_A$-$\mathbb{T}_B$-bimodel.

In this paper, we generalize Garner's result to the case of {\it free} algebraic theories,
which are algebraic theories with no {\it equational axioms}.
We analyse continuous functions on final comodels of free algebraic theories,
investigate residual comodels and bimodels of free algebraic theories,
and relate them to each other.
If a free algebraic theory $\mathbb{T}$ has the {\it signature}
$\Sigma=\{\sigma_1,\ldots,\sigma_n\}$,
its final comodel $S_{\mathbb{T}}$ comprises infinite-depth $n$-ary branching trees
with labels determined by the signature $\Sigma$.
This is a generalization of the case of $A$-valued streams in the sense that
$A$-valued streams can be seen as infinite-depth $1$-ary branching trees with labels in $A$.
Thus continuous functions $S_{\mathbb{T}} \rightarrow S_{\mathbb{T'}}$ between
final comodels of free algebraic theories $\mathbb{T},\mathbb{T'}$ translate
trees to trees and they also observe
only finite information of input trees to decide finite information of output trees
(topologies on $S_{\mathbb{T}}$ and $S_{\mathbb{T'}}$ are defined in Section \ref{sec:3}).

The key point in this generalization is that these continuous functions can observe
{\it parallel} information such as sibling nodes in trees.
On the other hand, elements of the final $\mathbb{T}$-residual $\mathbb{T'}$-comodel can observe
only {\it serial} (in other words, {\it straight} or {\it successive}) finite information of input trees.
Therefore we must restrict ourselves to consider only such functions (we call them {\it straight functions} tentatively).
Here, as the case of streams, different elements can give the same straight function.
This non-bijectiveness will be eliminated when we consider the final $\mathbb{T}$-$\mathbb{T'}$-bimodel.
Consequently, we get the following diagram:
\begin{equation}
    \begin{tikzpicture}[auto]
        \node (s) at (0,0) {straight functions $S_{\mathbb{T}} \rightarrow S_{\mathbb{T'}}$};
        \node (i) at (-3,-2) {the final $\mathbb{T}$-residual $\mathbb{T'}$-comodel};
        \node (e) at (3,-2) {the final $\mathbb{T}$-$\mathbb{T'}$-bimodel.};
        \draw[double distance=2pt] (s) --(e);
        \draw[->] (e) --node[swap] {\textsf{reify}} (i);
        \draw[->] (i) --node {\textsf{reflect}} (s);
    \end{tikzpicture}
\end{equation}
And we will show that the composition $\textsf{reflect} \circ \textsf{reify}$
is the identity on the final $\mathbb{T}$-$\mathbb{T'}$-bimodel
when it is identified with the set of straight functions.
Although the restriction that we consider only straight functions might seem artificial,
the correspondence we will construct is in fact a generalization of \cite{garnerStreamProcessorsComodels2021}
because there is no parallel information in the case of streams.
Note that, to realize our restriction for maps,
we will consider a category $\cat$
instead of the usual category $\mathword{Top}$ of topological spaces and continuous functions.
Then the set of straight functions $S_{\mathbb{T}} \rightarrow S_{\mathbb{T'}}$
can be identified with a hom-set $\cat(S_{\mathbb{T}},S_{\mathbb{T'}})$.

\subsection{Outline of The Paper}

This paper is organized as follows.

In Section \ref{sec:2}, we review basic notions of algebraic theories, their models, and comodels.
Especially, we give examples of comodels to help with our intuition.

In Section \ref{sec:3}, we discuss topological notions on comodels.
Following Garner \cite{garnerStreamProcessorsComodels2021}'s {\it operational topology},
we define the {\it operational sub-basis}.
Since we mainly use sub-basis, we define the category $\cat$ of sets with sub-basis.
Additionally, we observe that final comodels in $\mathword{Set}$ become final comodels in $\cat$.

In Section \ref{sec:4}, we disucuss residual comodels and bimodels.
Each elements of the final residual comodel constructs a function between two final comodels.
This construction will be named {\textsf{reflect}}.
We also define a map {\textsf{reify}} from the final bimodel to the final residual comodel.

In Section \ref{sec:6},
we show the main result of this paper:
$\cat(S_{\mathbb{T}},S_{\mathbb{T'}})$ can appear as
the final $\mathbb{T}$-$\mathbb{T'}$-bimodel.
Though the proof is long and might look complex,
the technique and key ideas are similar to the case of stream processors \cite{garnerStreamProcessorsComodels2021}.
Difficulty comes from the existence of parallel information in trees,
which constitute final comodels of free theories.
At the end of this section, we explain that the composition $\textsf{reflect} \circ \textsf{reify}$ becomes the identity on $\cat(S_{\mathbb{T}},S_{\mathbb{T'}})$.

In Section \ref{sec:8},
we summarize our results and describe directions for future work.

\vspace{10pt}

We use the following notations.

\begin{itemize}
    \item $\mathbb{N} \coloneqq \{0,1,2,\ldots\}$ is the set of natural numbers.

    \item For a set $A$,
    $A^{\Nat}$ denotes the set of infinite sequences of elements of $A$ and
    $A^{\ast}$ denotes the set of finite sequences of elements of $A$.
    Their (disjoint) union $A^{\ast} \cup A^{\Nat}$ is denoted by $A^{\leq \Nat}$.
    The empty sequence in $A^{\ast}$ is written as $\epsilon$.

    \item For a finite sequence $s \in A^{\ast}$ and a finite or infinite sequence $t \in A^{\leq \Nat}$,
    $s \cdot t$ denotes the concatenation of $s,t$.

    \item Usually we use $V$, $W$ as sets of variables, and
    variables are denoted by $v$, $w$, $v'$, $v_1$, $v_2$, etc.
\end{itemize}

\subsection{Related Work}

References for (co)algebras include \cite{jacobs_2016,Jacobs1997ATO,ruttenUniversalCoalgebraTheory2000},
in which various usages of coalgebras in computer science are explained;
one is as implementations of infinite objects and
another one is as representations of transition systems.
On the other hand, comodels also can be seen as transition systems.
As described in \cite{powerComodelsCoalgebrasState2004,10.1016/j.entcs.2008.10.018,uustaluStatefulRunnersEffectful2015},
comodels of an algebraic theory $\mathbb{T}$ provide environments for evaluating $\mathbb{T}$-terms.
Applications of algebraic theories in the study of computational effects originated in
Plotkin and Power \cite{inproceedings}, and originally,
such categorical studies of computational effects were founded by Moggi \cite{moggiNotionsComputationMonads1991},
in which he advocated that various kinds of computational effects, such as exception and nondeterminism,
can be modeled by monads.

Researches on coalgebraic characterizations of continuous functions on final coalgebras
originated in Ghani et al. \cite{DBLP:journals/corr/abs-0905-4813},
which is about stream processors.
After that, in \cite{article}, they generalized their results to
the case of final coalgebras of functors called {\it containers}.
In their work, remaining problems were non-bijectiveness of representations
and verification of {\it completeness} in the latter case,
i.e., it is not resolved whether all continuous functions can be expressed by their representations
(however, for the case of stream processors, completeness was already proved).
Then Garner \cite{garnerStreamProcessorsComodels2021} gave a bijective characterization of stream processors
and reformulated the result of \cite{DBLP:journals/corr/abs-0905-4813}
in terms of algebraic theories and their comodels
(in particular, he used notions of residual comodels and bimodels).
Our study generalizes Garner's techniques to the case of free algebraic theories.
An advantage of comodel-theoretic characterization is
the easy verification of completeness, which will be simply done by the universality of final objects.
Therefore we expect that this technique is useful to give complete characterizations in more general cases.

As for stream processors,
there exists another well-known related concept called transducers \cite{10.1109/FOCS.1967.3},
which is a generalization of automata.
In \cite{bealDeterminizationTransducersFinite2002}, Beal and Carton
argued when functions $A^{\Nat} \rightarrow B^{\Nat}$
realized by transducers become continuous
(in their paper, $A$ and $B$ were assumed to be finite).
In a recent work \cite{lmcs:7712},
Hyvernat used type-theoretic transducers to represent continuous functions between coinductive types
(note that completeness of representations is not verified yet).
According to Hyvernat,
observing parallel information in trees is equivalent to the backtracking of transducers.
This insight leads us to an idea that,
if there is a comodel-theoretic characterization of backtracking,
we might be able to characterize not only straight functions
but also arbitrary continuous functions between final comodels.

\section{Algebraic Theories and Their (Co)Models}
\label{sec:2}

In this section, we review basic concepts and notations for
algebraic theories and their models as well as comodels.
We mostly follow Garner's treatment in \cite{garnerStreamProcessorsComodels2021,garnerCostructurecosemanticsAdjunctionComodels2020}.

\begin{definition}
    An {\it algebraic theory} $\mathbb{T}$ is a pair $(\Sigma_{\mathbb{T}},E_{\mathbb{T}})$,
    where $\Sigma_{\mathbb{T}}$ is a {\it signature} and $E_{\mathbb{T}}$ is a set of {\it equations} over $\Sigma_{\mathbb{T}}$.
    A {\it signature} comprises a set $\Sigma$ of operation symbols, and for each $\sigma \in \Sigma$ a set $|\sigma|$, its {\it arity}.
    Given a signature $\Sigma$ and a set $V$, we define the set $\Sigma(V)$ of $\Sigma${\it -terms with variables in} $V$ by
    the inductive clauses
    \begin{align}
        v \in V &\Rightarrow v \in \Sigma(V),\\
        \sigma \in \Sigma,\  t_i \in \Sigma(V)\ (i \in |\sigma|) &\Rightarrow
        \sigma(\lambda i.t_i) \in \Sigma(V).
    \end{align}
    An {\it equation over a signature} $\Sigma$ is a formal equality $t=u$ between terms in the same set of free variables.
    We say $\mathbb{T}$ is {\it free} if it has no equation, i.e., if $E_{\mathbb{T}}=\emptyset$.
\end{definition}

We usually say ``theory'' to mean ``algebraic theory''.

\begin{definition}
    For a signature $\Sigma$ and a term $t \in \Sigma(V)$ and terms $u_v \in \Sigma(W)$ $(v \in V)$,
    we define the {\it substitution} $t(\lambda v.u_v) \in \Sigma(W)$ by recursion on $t$:
    \begin{align}
        v \in V &\Rightarrow v(\lambda v.u_v) \coloneqq u_v,\\
        \sigma \in \Sigma,\ t_i \in \Sigma(V)\ (i \in |\sigma|) &\Rightarrow
        (\sigma(\lambda i.t_i))(\lambda v.u_v) \coloneqq \sigma(\lambda i.t_i(\lambda v.u_v)).
    \end{align}
    Given a theory $\mathbb{T}$, we define $\mathbb{T}${\it -equivalence} as the smallest family of
    substitution-congruence $\equiv_\mathbb{T}$ on the sets $\Sigma_{\mathbb{T}}(V)$ such that
    $t \equiv_\mathbb{T} u$ for all equations $t=u \in E_{\mathbb{T}}$.
    The set $T(V)$ of $\mathbb{T}${\it -terms with variables in} $V$ is
    the quotient $\Sigma(V)/\equiv_\mathbb{T}$.
\end{definition}
When writing $\sigma(\lambda i.t_i)$ for a symbol $\sigma$, we assume that the variable $i$ ranges over $|\sigma|$.

When we see a theory $\mathbb{T}$ as specifying a computational effect as advocated in \cite{inproceedings},
$T(V)$ is seen as the set of computations with effects from $\mathbb{T}$ returning a value in $V$.
Well-known examples are theories for effects of output, state, exception, nondeterminism, and so on.
In this article, we are mainly interested in the theory of input and its expansions
(in short, we basically consider free theories).

\begin{example}
    \label{Input theory}
    \cite{garnerStreamProcessorsComodels2021}
    Given a set $A$, the theory $\mathbb{T}_A$ of $A${\it -valued input} comprises
    a single $A$-ary operation symbol $\textbf{read}_A$ and no equations.
    The set of terms $T_A(V)$ is the initial algebra $\mu X.V+X^A$.
    Its elements may be seen as $A$-ary branching trees with leaves labelled in $V$;
    or, from another perspective, they can be seen as programs
    which request $A$-values and use them to determine a return value in $V$.
\end{example}

\begin{example}
    \label{free theory}
    Consider a free theory $\mathbb{T}$ with $n$ operation symbols $\sigma_1,\ldots,\sigma_n$.
    For all $i$, we write $|\sigma_i|=A_i$.
    ($\mathbb{T}_A$ is the case $n = 1$ and $|\sigma_1|=A$.)
    The set of terms $T(V)$ is the initial algebra
    $\mu X.V+\coprod_{\sigma \in \Sigma_\mathbb{T}}X^{|\sigma|}$,
    where $\coprod$ denotes coproduct or direct sum.
    Its elements can be thought of as trees such that
    each node is labelled by a symbol $\sigma_i$ and such a node has $A_i$-ary branches
    and finally, their leaves are labelled in $V$ (or nullary operation symbols if they exist).
    Computationally, they are programs which request $n$-sorted values and return a value in $V$ depending on inputs.

    In particular, we write $\mathbb{T}^{(n)}_A$
    for the free theory with $n$ symbols and $|\sigma_1|,\ldots,|\sigma_n|$ are all the same set $A$.
    As trees, elements of $T^{(n)}_A(V)$ has the same form as elements of $T_A(V)$ but their nodes have labels in $\{1,\ldots,n\}$.
    Differences between $\mathbb{T}^{(n)}_A$ for $n \geq 2$ and $\mathbb{T}_A$ will
    become more significant when we consider their {\it comodels}.
\end{example}

\begin{definition}
    \label{Sigma structure}
    Let $\Sigma$ be a signature and $\mathscr{C}$ be a category with powers.
    A $\Sigma${\it -structure} $\bm{X}$ in $\mathscr{C}$ is
    an object $X \in \mathscr{C}$ with an {\it operation} $\sem{\sigma}_{\bm{X}} : X^{|\sigma|} \rightarrow X$ for each $\sigma \in \Sigma$.
    For each $t \in \Sigma(V)$ the {\it derived operation} $\sem{t}_{\bm{X}} : X^V \rightarrow X$ is
    determined by the recursive clauses:
    \[
    \sem{v}_{\bm{X}}=\pi_v
    \quad and \quad
    \sem{\sigma(\lambda i.t_i)}_{\bm{X}}=X^V \xrightarrow{(\sem{t_i}_{\bm{X}})_{i\in|\sigma|}} X^{|\sigma|} \xrightarrow{\sem{\sigma}_{\bm{X}}}X.
    \]
\end{definition}

\begin{definition}
    Given a theory $\mathbb{T}$, a $\mathbb{T}$-{\it{model}} in $\mathscr{C}$ is a $\Sigma$-structure $\bm{X}$ which satisfies
    $\sem{t}_{\bm{X}} = \sem{u}_{\bm{X}}$ for all equations $t=u$ of $\mathbb{T}$.
    $\mathbb{T}$-models in $\mathscr{C}$ form a category with morphisms
    $f:X \rightarrow Y$ in $\mathscr{C}$ such that the following diagram commutes for all $\sigma \in \Sigma$:
    \[
        \begin{tikzcd}
            X^{|\sigma|} \ar[r, "\sem{\sigma}_{\bm{X}}"] \ar[d, swap, "f^{|\sigma|}"] & X \ar[d, "f"] \\
            Y^{|\sigma|} \ar[r, "\sem{\sigma}_{\bm{Y}}"] & Y
        \end{tikzcd}
    \]

    The unqualified term ``model'' will mean ``model in $\mathword{Set}$''.
    We write  $\mathword{Mod}(\mathbb{T}, \mathscr{C})$ for the category of $\mathbb{T}$-models in $\mathscr{C}$,
    and $\mathword{Mod}(\mathbb{T})$ for $\mathword{Mod}(\mathbb{T},\mathword{Set})$.
\end{definition}

The set of terms $T(V)$ has a $\mathbb{T}$-model structure given by substitution.
This structure has the following universal property.

\begin{lemma}
    \label{free model}
    The set of terms $T(V)$ is the free $\mathbb{T}$-model on $V$ by
    the inclusion of variables $\eta_V:V \rightarrow T(V)$.
    That is, for any $\mathbb{T}$-model $\bm{X}$ and
    any function $f:V \rightarrow X$ to the underlying set of $\bm{X}$,
    there is the unique $\mathbb{T}$-model morphism
    $f^{\dagger}:\bm{T}(V) \rightarrow \bm{X}$ with $f^{\dagger} \circ \eta_V = f$.
    Spelling out the detail, we have $f^{\dagger}(t)=\sem{t}_{\bm{X}}(\lambda v.f(v))$.
\end{lemma}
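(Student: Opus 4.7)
The plan is to first define $f^{\dagger}$ on raw $\Sigma_{\mathbb{T}}$-terms by recursion, then show the definition descends to the quotient $T(V)=\Sigma_{\mathbb{T}}(V)/\equiv_{\mathbb{T}}$, and finally check that the descended map is a unique $\mathbb{T}$-model morphism extending $f$. Concretely, I would first define $\tilde{f}:\Sigma_{\mathbb{T}}(V) \to X$ by $\tilde{f}(v)=f(v)$ and $\tilde{f}(\sigma(\lambda i.t_i))=\sem{\sigma}_{\bm{X}}(\lambda i.\tilde{f}(t_i))$. A routine term-induction using Definition \ref{Sigma structure} shows $\tilde{f}(t)=\sem{t}_{\bm{X}}(\lambda v.f(v))$, which is the formula we ultimately want.

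Next I would prove the substitution lemma $\tilde{f}(t(\lambda v.u_v))=\sem{t}_{\bm{X}}(\lambda v.\tilde{f}(u_v))$ by induction on $t$; the variable case is definitional, and the operation case unfolds both the definitions of substitution and of the derived operation $\sem{\sigma(\lambda i.t_i)}_{\bm{X}}$. Using this, I would show that the kernel relation $R=\{(t,u)\mid \tilde{f}(t)=\tilde{f}(u)\}$ on $\Sigma_{\mathbb{T}}(V)$ is a substitution-congruence that contains every axiom of $E_{\mathbb{T}}$: closure under $\Sigma_{\mathbb{T}}$-operations is immediate from the recursive clause defining $\tilde{f}$; closure under substitution is a direct corollary of the substitution lemma; and axiom inclusion uses that $\bm{X}$ is a $\mathbb{T}$-model, so $\sem{t}_{\bm{X}}=\sem{u}_{\bm{X}}$ whenever $t=u\in E_{\mathbb{T}}$, giving $\tilde{f}(t)=\tilde{f}(u)$. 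Since $\equiv_{\mathbb{T}}$ is the \emph{smallest} such relation, $\equiv_{\mathbb{T}}\subseteq R$, so $\tilde{f}$ descends to a well-defined $f^{\dagger}:T(V)\to X$.

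Finally, the $\mathbb{T}$-model structure on $T(V)$ is given by $\sem{\sigma}_{\bm{T}(V)}(\lambda i.[t_i])=[\sigma(\lambda i.t_i)]$, so the equation $f^{\dagger}\circ \sem{\sigma}_{\bm{T}(V)}=\sem{\sigma}_{\bm{X}}\circ (f^{\dagger})^{|\sigma|}$ holds by the recursive clause of $\tilde{f}$, and $f^{\dagger}\circ\eta_V=f$ by the variable clause. For uniqueness, if $g:\bm{T}(V)\to\bm{X}$ is any $\mathbb{T}$-model morphism with $g\circ\eta_V=f$, then an induction on representative terms shows $g([t])=\tilde{f}(t)$, forcing $g=f^{\dagger}$.

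The main obstacle is the well-definedness step: the inductive definition of $\equiv_{\mathbb{T}}$ as a substitution-congruence is given only schematically in the paper, so one must be confident about what ``smallest substitution-congruence containing $E_{\mathbb{T}}$'' means and verify that $R$ really has every closure property in that specification. The substitution lemma is what makes this manageable, since without it one cannot check closure of $R$ under the substitution rule. Everything else -- defining $\tilde{f}$, checking the morphism equation, and uniqueness -- reduces to straightforward structural inductions on $\Sigma_{\mathbb{T}}$-terms.
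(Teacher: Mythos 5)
The paper gives no proof of Lemma \ref{free model} at all: it is stated as a standard fact about free models of an algebraic theory, so there is nothing in the text to compare against line by line. Your argument is the classical free-algebra construction and is essentially correct: define $\tilde f$ on raw $\Sigma_{\mathbb{T}}$-terms by recursion, identify it with $\sem{t}_{\bm{X}}(\lambda v.f(v))$, prove the substitution lemma, descend to the quotient, and check the morphism equation and uniqueness by structural induction. All of those steps go through as you describe.

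The one place I would tighten the write-up is the well-definedness step, and you have correctly flagged it as the delicate point. The relation $\equiv_{\mathbb{T}}$ is defined as the smallest \emph{family} of substitution-congruences indexed by all variable sets, whereas your $R=\ker\tilde f$ lives only on $\Sigma_{\mathbb{T}}(V)$ for the one set $V$; minimality of a family can only be invoked against another family. The standard repair, which uses exactly the ingredients you already have, is to take for each set $W$ the semantic kernel $R_W=\{(t,u)\in\Sigma_{\mathbb{T}}(W)^2\mid \sem{t}_{\bm{X}}=\sem{u}_{\bm{X}}\ \text{as maps}\ X^W\to X\}$. Your substitution lemma (in the form $\sem{t(\lambda v.u_v)}_{\bm{X}}(\xi)=\sem{t}_{\bm{X}}(\lambda v.\sem{u_v}_{\bm{X}}(\xi))$) shows this family is closed under substitution in both positions, closure under operations is immediate, and the axioms lie in it because $\bm{X}$ is a $\mathbb{T}$-model; hence $\equiv_{\mathbb{T}}\subseteq(R_W)_W$, and specializing to $W=V$ and evaluating at the valuation $\lambda v.f(v)$ gives $\equiv_{\mathbb{T}}\subseteq\ker\tilde f$ on $\Sigma_{\mathbb{T}}(V)$. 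With that adjustment the proof is complete; the remaining verifications (the morphism square, $f^{\dagger}\circ\eta_V=f$, and uniqueness) are routine exactly as you say.
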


This lemma allows us to define the {\it Kleisli category} of $\mathbb{T}$.

\begin{definition}
    \label{Kleisli category}
    For an algebraic theory $\mathbb{T}$,
    the {\it Kleisli category} $\mathope{Kl}(\mathbb{T})$ of $\mathbb{T}$
    has sets as objects.
    For sets $A, B$, the hom-set $\mathope{Kl}(\mathbb{T})(A,B)$ is defined as $\mathword{Set}(A,T(B))$.
    The identity at $A$ is $\eta_A:A \rightarrow T(A)$.
    Composition of $f:A \rightarrow T(B)$ and $g:B \rightarrow T(C)$ is
    $g^{\dagger} \circ f$ with $g^{\dagger}$ as in Lemma \ref{free model}.
\end{definition}

There are well-known functors related to $\mathope{Kl}(\mathbb{T})$;
the free functor $F_{\mathbb{T}}:\mathword{Set} \rightarrow \mathope{Kl}(\mathbb{T})$ is the identity on objects and
sends $f \in \mathword{Set}(X,Y)$ to $\eta_Y \circ f \in \mathope{Kl}(\mathbb{T})(X,Y)$.
the comparison functor $I_{\mathbb{T}}:\mathope{Kl}(\mathbb{T}) \rightarrow \mathword{Mod}(\mathbb{T})$ acts as
$A \mapsto \bm{T}(A)$ and $f \mapsto f^{\dagger}$.

We now turn to $comodel$ which is the dual notion of model.

\begin{definition}
    Let $\mathbb{T}$ be a theory. A $\mathbb{T}$-{\it{comodel}} $\bm{S}$ in a category $\mathscr{C}$ with copowers
    is a model of $\mathbb{T}$ in $\mathscr{C}^{op}$,
    i.e. an object $S \in \mathscr{C}$ with co-operations
    $\sem{\sigma}^{\bm{S}}:S \rightarrow |\sigma| \cdot S$ satisfying the equations of $\mathbb{T}$.
    Morphisms between comodels $\bm{S}$, $\bm{S'}$ are morphisms $f:S \rightarrow S'$ in $\mathscr{C}$
    such that the following diagram commutes for each symbol $\sigma$ in $\Sigma_{\mathbb{T}}$:
    \[
        \begin{tikzcd}
            S \ar[r, "\sem{\sigma}^{\bm{S}}"] \ar[d, swap, "f"] & {|\sigma| \cdot S} \ar[d, "|\sigma| \cdot f"] \\
            S' \ar[r, "\sem{\sigma}^{\bm{S'}}"] & {|\sigma| \cdot S'}
        \end{tikzcd}
    \]

    The unqualified term ``comodel'' will mean ``comodel in $\mathword{Set}$''.
    We write $\mathword{Comod}(\mathbb{T},\mathscr{C})$ for the category of $\mathbb{T}$-comodels in $\mathscr{C}$,
    and $\mathword{Comod}(\mathbb{T})$ for $\mathword{Comod}(\mathbb{T},\mathword{Set})$.
    We here note that $\mathword{Comod}(\mathbb{T},\mathscr{C}) \cong \mathword{Mod}(\mathbb{T},\mathscr{C}^{op})^{op}$,
    the opposite of the category of $\mathbb{T}$-models in the opposite of $\mathscr{C}$.
\end{definition}

\begin{example}
    A comodel $\bm{S}$ of the theory $\mathbb{T}_A$ of $A$-valued input is
    a set $S$ with a map $\sem{{\bf{read}}_A}^{\bm{S}}:S \rightarrow A \times S$.
    This map can be decomposed into two maps:
    the output map $o^{\bm{S}}:S \rightarrow A$ and
    the transition map $\partial^{\bm{S}}:S \rightarrow S$.

    We usually call elements of comodels {\it states}.
    Then a comodel $\bm{S}$ of $\mathbb{T}_A$ can be seen as a {\it state machine} which answers to
    requests for $A$-value and transition to the next state,
    determined by $o^{\bm{S}}$ and $\partial^{\bm{S}}$, respectively.
\end{example}

As explained in \cite{powerComodelsCoalgebrasState2004,10.1016/j.entcs.2008.10.018,uustaluStatefulRunnersEffectful2015},
when a theory $\mathbb{T}$ presents a computational effect, its comodels provide
deterministic environments for evaluating computations with effects from $\mathbb{T}$.
In general, given a $\mathbb{T}$-comodel $\bm{S}$ and a term $t \in T(V)$,
we have {\it derived co-operation} $\sem{t}^{\bm{S}}:S \rightarrow V \times S$ as the dual of
derived operation in Definition \ref{Sigma structure}:
\begin{align}
    \sem{v}^{\bm{S}}(s)&=\iota_v(s)=(v,s)\\
    \sem{\sigma(\lambda i.t_i)}^{\bm{S}}(s)&=\sem{t_i}^{\bm{S}}(s')
    \text{ where } \sem{\sigma}^{\bm{S}}(s)=(i,s').
\end{align}
Intuitively, evaluating a term (or a tree) $t$ with an initial state $s$
is selection of a path to a value in $t$ depending on the behavior of $s$ as follows:
first, if $t=\sigma(\lambda i.t_i)$ and $\sem{\sigma}^{\bm{S}}(s)=(i,s')$,
$s$ chooses the $i$-th branch
and transition to the next state $s'$;
then continue the computation by evaluating $t_i$ with $s'$;
finally, if the term under the chosen branch is a value,
then the evaluation terminates and returns that value.
Clearly, return values of a term $t$ appear in $t$ as variables.

We focus on the final comodel of a theory.
The final comodel of $\mathbb{T}$ is the final object of $\mathword{Comod}(\mathbb{T})$.
We will also consider the final comodel in a category $\mathscr{C}$ other than $\mathword{Set}$,
that is, the final object of $\mathword{Comod}(\mathbb{T},\mathscr{C})$.

\begin{example}
    The final comodel of $\mathbb{T}_A$ is $\bm{A}^{\Nat}$, the set of infinite sequences of elements in $A$.
    Its co-operation $\sem{\textbf{read}}^{\bm{A}^{\Nat}}:A^{\Nat} \rightarrow A \times A^{\Nat}$ is composed of
    \[
        o^{A^{\Nat}}:a_0a_1a_2\cdots \mapsto a_0
        \quad
        \partial^{A^{\Nat}}:a_0a_1a_2\cdots \mapsto a_1a_2\cdots.
    \]

    In order to help understanding the next example,
    we see a sequence in $A^{\Nat}$ in a slightly different way.
    Firstly, we see a sequence $\overrightarrow{a} \in A^{\Nat}$
    as the function $a:\Nat \rightarrow A$ with $a(k)=a_k$.
    We can get the $k$-th element $a(k)$ by
    applying $\sem{\textbf{read}}^{\bm{A}^{\Nat}}$
    (more precisely, applying its component $\partial^{A^{\Nat}}$) {$k$-times}
    and taking the head element of the resulting sequence.
    So, it is reasonable to see the domain of the function $a$ as
    $\{{\textbf{read}}\}^{\ast}=\{\epsilon,{\textbf{read}},{\textbf{read}} \cdot {\textbf{read}},\ldots\}$,
    the set of finite repeats of the symbol ${\textbf{read}}$.
    Then components of $\sem{\textbf{read}}^{\bm{A}^{\Nat}}$ can be written as:
    \[
        o^{A^{\Nat}}:a \mapsto a(\epsilon)
        \quad
        \partial^{A^{\Nat}}:a \mapsto a({\textbf{read}} \_)
    \]
    where $(a({\textbf{read}} \_))({\textbf{read}}^k)=a({\textbf{read}} \cdot {\textbf{read}}^k)=a({\textbf{read}}^{k+1})$.
\end{example}

We recognize comodels and the final comodel of a free theory
in a similar way to the case of the theory of input.

\begin{example}
    \label{final model of free theory}
    Let $\mathbb{T}$ be a free theory.
    A comodel of $\mathbb{T}$ is
    a state machine that answers to requests for elements of $|\sigma|$ for each $\sigma \in \Sigma_{\mathbb{T}}$
    , and then transition to a next state depending on the requested operation symbol.
    This comprises a set of states $S$ and operations
    $\sem{\sigma}^{\bm{S}}=(o^{\bm{S}}_{\sigma}, \partial^{\bm{S}}_{\sigma}):S\rightarrow |\sigma|\times S$
    giving for each state $s \in S$ an output $o^{\bm{S}}_{\sigma} (s) \in |\sigma|$
    and a next state $\partial^{\bm{S}}_{\sigma} (s) \in S$.
    By taking the product of $\sem{\sigma}^{\bm{S}}$'s,
    we can regard the comodel $\bm{S}$ as a coalgebra of the functor
    $H_{\mathbb{T}} \coloneqq \prod_{\sigma \in \Sigma_{\mathbb{T}}} (|\sigma| \times (\_)) \cong (\prod_{\sigma \in \Sigma_{\mathbb{T}}} |\sigma|) \times (\_)^{\Sigma_{\mathbb{T}}}$.
    Moreover, morphisms between $\mathbb{T}$-comodels coincide with those between $H_{\mathbb{T}}$-coalgebras.
    Therefore the final $\mathbb{T}$-comodel $\bm{S}_{\mathbb{T}}$ is
    the final $H_{\mathbb{T}}$-coalgebra $\nu X.(\prod_{\sigma \in \Sigma_{\mathbb{T}}} |\sigma|) \times X^{\Sigma_{\mathbb{T}}}=$
    $(\prod_{\sigma \in \Sigma_{\mathbb{T}}} |\sigma|)^{{\Sigma_{\mathbb{T}}}^{\ast}}$.
    The comodel structure of $\bm{S}_{\mathbb{T}}$ is given by
    \[
        o_{\sigma}^{\bm{S}_{\mathbb{T}}}(s)=\pi_{\sigma}(s(\epsilon))
        \quad
        \partial_{\sigma}^{\bm{S}_{\mathbb{T}}}(s)=s(\sigma \_)
    \]
    where $\pi_{\sigma}$ denotes the projection map
    $(\prod_{\sigma \in \Sigma_{\mathbb{T}}} |\sigma|) \rightarrow |\sigma|$.
\end{example}

We omit indices $\bm{X},\ \bm{S}$ of $\sem{\sigma}_{\bm{X}}$, $\sem{\sigma}^{\bm{S}}$,
$o^{\bm{S}}_{\sigma}$ and $\partial^{\bm{S}}_{\sigma}$, if it is clear from contexts.

An important property of the final comodel is that
it describes ``observable behaviors'' of states in comodels.
As described in \cite{garnerCostructurecosemanticsAdjunctionComodels2020},
the final comodel of the theory $\mathbb{T}$ can be characterized as
the set of all possible states of all possible comodels modulo {\it operational equivalence}.

\begin{definition}
    Let $\mathbb{T}$ be an algebraic theory.
    For states $s_1 \in \bm{S}_1$, $s_2 \in \bm{S}_2$ of two $\mathbb{T}$-comodels,
    we say that they are {\it operationally equivalent} if
    for all $\mathbb{T}$-terms $t$, $\pi_V(\sem{t}^{\bm{S}_1}(s_1))=\pi_V(\sem{t}^{\bm{S}_2}(s_2))$.
\end{definition}

\begin{lemma}[\cite{garnerCostructurecosemanticsAdjunctionComodels2020}]
    States $s_1 \in \bm{S}_1$ and $s_2 \in \bm{S}_2$ of two $\mathbb{T}$-comodels are
    operationally equivalent iff they become equal under each unique map to
    the final $\mathbb{T}$-comodel.
\end{lemma}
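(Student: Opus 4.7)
The plan is to treat the two directions separately. For the direction that equality under the final maps implies operational equivalence, I would invoke the standard fact, dual to Definition~\ref{Sigma structure}, that any $\mathbb{T}$-comodel morphism $f : \bm{S} \to \bm{S}'$ preserves derived co-operations, i.e., $\sem{t}^{\bm{S}'} \circ f = (V \times f) \circ \sem{t}^{\bm{S}}$ for every $t \in T(V)$, proved by a routine induction on $t$. Applying this to the unique maps $\phi_i : \bm{S}_i \to \bm{S}_{\mathbb{T}}$ and projecting to $V$ yields $\pi_V \sem{t}^{\bm{S}_i}(s_i) = \pi_V \sem{t}^{\bm{S}_{\mathbb{T}}}(\phi_i(s_i))$, so $\phi_1(s_1) = \phi_2(s_2)$ immediately gives operational equivalence at every $t$.

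For the converse, my approach is to endow the disjoint union $\bm{S}_1 \sqcup \bm{S}_2$ with its componentwise $\mathbb{T}$-comodel structure---so that each inclusion becomes a comodel morphism and the pointwise satisfaction of equations on each $\bm{S}_i$ transfers automatically---and then to show that the operational-equivalence relation $\sim$ on this disjoint union is a comodel congruence. Granted the congruence, the quotient $(\bm{S}_1 \sqcup \bm{S}_2)/{\sim}$ inherits a $\mathbb{T}$-comodel structure for which the projection is a comodel morphism, and the uniqueness clause in the finality of $\bm{S}_{\mathbb{T}}$ forces the composite of the projection with the unique morphism out of the quotient to coincide with the evident pairing of $\phi_1$ and $\phi_2$ on $\bm{S}_1 \sqcup \bm{S}_2$. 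Hence states identified by $\sim$ are identified by this pairing, yielding $\phi_1(s_1) = \phi_2(s_2)$ from $s_1 \sim s_2$.

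The main obstacle is verifying that $\sim$ is indeed a congruence with respect to every co-operation $\sem{\sigma}$. Given $s \sim s'$, writing $\sem{\sigma}(s) = (i, r)$ and $\sem{\sigma}(s') = (i', r')$, I need $i = i'$ and $r \sim r'$. The equality of indices falls out of testing with $t = \sigma(\lambda j. v_j)$, whose value-component at $s$ is exactly $i$. For the propagation $r \sim r'$, for each test term $u \in T(W)$ I would consider the compound term $\sigma(\lambda j. u) \in T(W)$, which unfolds by the recursive clause for derived co-operations to $\sem{u}(r)$ at $s$ and to $\sem{u}(r')$ at $s'$; operational equivalence of $s$ and $s'$ on these compound terms is then precisely operational equivalence of $r$ and $r'$ on $u$. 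Once this congruence step is in place, the rest of the argument reduces to the formal bookkeeping of quotient comodels and the universal property of $\bm{S}_{\mathbb{T}}$.
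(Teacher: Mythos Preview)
Your argument is correct. Note, however, that the paper does not supply its own proof of this lemma: it is simply cited from \cite{garnerCostructurecosemanticsAdjunctionComodels2020} and stated without proof, so there is nothing to compare against at the level of technique.

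On the substance, both directions go through as you describe. The forward direction is immediate from the fact that comodel morphisms commute with derived co-operations. For the converse, your quotient-by-congruence strategy is the standard one, and the two test-term choices are exactly right: $\sigma(\lambda j.v_j)$ with distinct variables detects the output index, while $\sigma(\lambda j.u)$ with a constant family $u$ detects operational equivalence of the successor states. One small point worth making explicit when you write this up: the quotient $(\bm{S}_1 \sqcup \bm{S}_2)/{\sim}$ must itself satisfy the equations of $\mathbb{T}$, not merely carry the $\Sigma_{\mathbb{T}}$-costructure. This follows because the quotient map is a surjective costructure morphism and the source satisfies the equations, so derived co-operations on the quotient are determined by those on the source; but it deserves a sentence, since without it the finality of $\bm{S}_{\mathbb{T}}$ does not apply to the quotient.
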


\section{Operational Sub-basis on Comodels}
\label{sec:3}

Garner \cite{garnerStreamProcessorsComodels2021} defined the {\it operational topology} on a comodel of an algebraic theory $\mathbb{T}$ as
the topology whose basic open sets describe those states which are
indistinguishable with respect to a finite set of $\mathbb{T}$-computations.

\begin{definition}
    \cite{garnerStreamProcessorsComodels2021}
    Let $\bm{S}$ be a $\mathbb{T}$-comodel.
    The {\it operational topology} on $S$ is generated by sub-basic open sets
    \[
        [t \mapsto v]_{\bm{S}} \coloneqq \{s\in S \mid \sem{t}^{\bm{S}}(s)=(v,s') \text{ for some } s'\}
        \qquad
        \text{for all } t \in T(V) \text{ and } v \in V.
    \]
\end{definition}

We omit the subscript $\bm{S}$ of $[t \mapsto v]_{\bm{S}}$ if the comodel is clear.

We call the sub-basis $\{[t \mapsto v]_{\bm{S}} \mid t:\text{term, } v:\text{variable}\}$
of the operational topology, the {\it operational sub-basis} of the comodel $\bm{S}$.
In particular, for the final comodel of a theory $\mathbb{T}$,
we write $\Phi_{\mathbb{T}}$ for its operational sub-basis.
In the sequel, we mainly use this operational sub-basis, not the operational topology.
Thus we define a category whose objects are sets with sub-basis.

\begin{definition}
    We define the category $\cat$ of {\it sets with sub-basis and functions continuous on sub-basis}.
    Objects are pairs $(X,\Phi)$, $X$ is a set and $\Phi$ is a subset of $\mathcal{P}(X)$
    which contains $\emptyset$ and $X$.
    A morphism $(X,\Phi) \rightarrow (Y,\Psi)$ is a mapping $f:X \rightarrow Y$ such that
    for each set $U \in \Psi$, $f^{-1}(U) \in \Phi$. We call such maps as {\it continuous functions on sub-basis}.
\end{definition}

We can show that $\cat$ is a cocomplete category and its constructions of colimits are similar to colimits in $\mathword{Top}$.
As for comodels in $\cat$, the following is valid because we can show that,
if a set with sub-basis $(X,\Phi)$ has a $\mathbb{T}$-comodel structure in $\cat$,
then the sub-basis $\Phi$ contains all sets of the form $[t \mapsto v]$.
\begin{proposition}
    \label{prop of adjunction}
    There is an adjunction
    \[
        \begin{tikzpicture}[auto]
            \node (C) at (0,0) {$\mathword{Comod}(\mathbb{T},\cat)$}; \node (D) at (3,0) {$\mathword{Comod}(\mathbb{T})$};
            \draw[<-, transform canvas={yshift=6pt}] (C) --node {$\scriptstyle op$}(D);
            \draw[->, transform canvas={yshift=-6pt}] (C) --node[swap, label=above:$\top$] {$\scriptstyle U$} (D);
        \end{tikzpicture}
    \]
    where $op$ gives a comodel $\bm{S}$ the operational sub-basis, and $U$ forgets sub-basis.
\end{proposition}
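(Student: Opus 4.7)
The plan is to establish three ingredients: (i) $op$ is well-defined and functorial; (ii) $U$ is trivially a functor; and (iii) the adjunction $U\dashv op$ is witnessed by the identity-on-underlying-functions bijection of hom-sets. The conceptual heart is the claim cited just before the proposition, that any $\cat$-comodel's sub-basis automatically contains every set $[t\mapsto v]$, so I will prove that lemma first.

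To prove the lemma, let $(X,\Phi)$ carry a $\mathbb{T}$-comodel structure in $\cat$. By induction on $t\in T(V)$, I show that the derived co-operation $\sem{t}^{\bm{X}}:X\rightarrow V\cdot X$ is continuous on sub-basis: the base case $t=v$ is the coproduct inclusion $\iota_v$, and for the inductive step I write $\sem{\sigma(\lambda i.t_i)}^{\bm{X}}$ as the composite $X\xrightarrow{\sem{\sigma}^{\bm{X}}}|\sigma|\cdot X\xrightarrow{[\sem{t_i}^{\bm{X}}]_i}V\cdot X$, whose first factor is continuous by the comodel hypothesis and whose second is continuous by the universal property of the coproduct in $\cat$. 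Applied to the sub-basic open $\iota_v(X)\subseteq V\cdot X$---which is sub-basic because $X\in\Phi$ by the definition of objects in $\cat$---continuity yields $[t\mapsto v]_{\bm{X}}=(\sem{t}^{\bm{X}})^{-1}(\iota_v(X))\in\Phi$.

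For well-definedness of $op$, I equip $S$ with its operational sub-basis $\Phi_{\bm{S}}$ (augmented by $\emptyset$ and $S$) and verify continuity of each $\sem{\sigma}^{\bm{S}}:S\rightarrow|\sigma|\cdot S$ by a direct preimage calculation: the preimage of a sub-basic open $\iota_i([t\mapsto v])$ of the coproduct equals $[\sigma(\lambda j.t'_j)\mapsto v]_{\bm{S}}$, where $t'_i\coloneqq t$ and $t'_j\coloneqq w$ for $j\neq i$, with $w$ a fresh variable different from $v$. Functoriality is then immediate because any comodel morphism $f:\bm{S}\rightarrow\bm{S}'$ satisfies $f^{-1}([t\mapsto v]_{\bm{S}'})=[t\mapsto v]_{\bm{S}}$ by preservation of derived co-operations.

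The adjunction now follows formally. A morphism $\bm{X}\rightarrow op\bm{S}$ in $\mathword{Comod}(\mathbb{T},\cat)$ is a comodel morphism $f:X\rightarrow S$ whose preimages of operational sub-basic sets of $S$ lie in $\Phi_{\bm{X}}$, whereas a morphism $U\bm{X}\rightarrow\bm{S}$ in $\mathword{Comod}(\mathbb{T})$ is just a comodel morphism; the identity $f^{-1}([t\mapsto v]_{\bm{S}})=[t\mapsto v]_{\bm{X}}$ combined with the key lemma shows the continuity condition is automatic, so the two hom-sets coincide as sets of underlying functions and naturality is trivial. I expect the main obstacle to be the preimage computation in the well-definedness step for $op$, where the fresh-variable trick must be set up carefully, together with the inductive verification that derived co-operations in an arbitrary $\cat$-comodel are continuous; everything after those is formal.
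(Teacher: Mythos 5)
Your proposal is correct and follows exactly the route the paper indicates: the entire content is the key lemma that any $\mathbb{T}$-comodel structure on $(X,\Phi)$ in $\cat$ forces every $[t\mapsto v]$ into $\Phi$, after which the hom-sets witnessing $U\dashv op$ literally coincide and naturality is trivial. The one point to tighten is that the sub-basic opens of the copower $|\sigma|\cdot S$ in $\cat$ must be of the form $\bigcup_{i}\iota_i(U_i)$ with one $U_i\in\Phi$ per summand (this is precisely what makes your copairing step in the key lemma continuous), so the preimage computation establishing continuity of $\sem{\sigma}^{\bm{S}}$ should use a term $\sigma(\lambda i.\tilde t_i)$ with a renamed term for each summand simultaneously, which is the same fresh-variable trick you already describe applied coordinatewise.
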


As a corollary, the functor $op$ preserves the final object
(A similar result for the category $\mathword{Top}$ is in \cite{garnerStreamProcessorsComodels2021}).

\begin{corollary}
    \label{cor:1}
    For each theory $\mathbb{T}$, the final $\mathbb{T}$-comodel in $\mathword{Set}$ is, when endowed with
    its operational sub-basis, the final comodel in $\cat$.
\end{corollary}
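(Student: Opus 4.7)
The plan is to derive the corollary as a direct consequence of Proposition \ref{prop of adjunction}. Reading that adjunction as $U \dashv op$ — with $U$ the forgetful functor and $op$ the right adjoint — the functor $op$ preserves all limits, in particular the terminal object. Applying this to the final $\mathbb{T}$-comodel $\bm{S}_{\mathbb{T}}$ in $\mathword{Set}$ immediately yields that $op(\bm{S}_{\mathbb{T}}) = (\bm{S}_{\mathbb{T}}, \Phi_{\mathbb{T}})$ is the terminal object of $\mathword{Comod}(\mathbb{T}, \cat)$, which is precisely the claim.

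The only thing to check, which I would include for the reader's convenience, is that the direction of the adjunction is indeed $U \dashv op$. For this I would unwind the hom-set bijection
\[
\mathword{Comod}(\mathbb{T}, \cat)\bigl((\bm{R}, \Phi), op(\bm{S})\bigr) \;\cong\; \mathword{Comod}(\mathbb{T})\bigl(U(\bm{R}, \Phi), \bm{S}\bigr).
\]
A morphism on the left is an underlying $\mathbb{T}$-comodel morphism $f : \bm{R} \to \bm{S}$ which, in addition, pulls every operational sub-basic set $[t \mapsto v]_{\bm{S}}$ back into $\Phi$. Because $f$ commutes with all co-operations, a routine induction on $t$ yields $f^{-1}([t \mapsto v]_{\bm{S}}) = [t \mapsto v]_{\bm{R}}$, and the key observation stated before Proposition \ref{prop of adjunction} guarantees $[t \mapsto v]_{\bm{R}} \in \Phi$ because $(\bm{R}, \Phi)$ carries a comodel structure in $\cat$. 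Hence the continuity requirement is automatic, the two hom-sets are naturally identified, and the adjunction does point in the claimed direction.

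I anticipate no substantive obstacle. The real work is already concentrated in Proposition \ref{prop of adjunction} and its preparatory fact that sub-bases of comodels in $\cat$ must contain every operational sub-basic set; the corollary itself is then a textbook application of the fact that right adjoints preserve limits, with the hom-set unwinding above serving only to pin down orientation. If anything slightly delicate appears, it will be the induction identifying $f^{-1}([t \mapsto v]_{\bm{S}})$ with $[t \mapsto v]_{\bm{R}}$, but this is a straightforward recursion on the structure of $t$ using the compatibility square for $f$ at each operation symbol.
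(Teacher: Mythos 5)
Your proposal is correct and follows exactly the paper's intended argument: Corollary \ref{cor:1} is stated as an immediate consequence of Proposition \ref{prop of adjunction}, with $op$ the right adjoint of $U$ and hence preserving terminal objects. The extra unwinding of the hom-set bijection (using that comodel homomorphisms satisfy $f^{-1}([t\mapsto v]_{\bm{S}})=[t\mapsto v]_{\bm{R}}$ and that any sub-basis supporting a comodel structure in $\cat$ contains all such sets) is a sound, if optional, confirmation of the adjunction's orientation.
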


\section{Residual Comodel and Continuous Functions}
\label{sec:4}

{\it Residual comodels} allow us to describe stateful translations between different notions of computation.
The word {\it residual} comes from \cite{katsumataInteractionLawsMonads2019}.

\begin{definition}
    \cite{garnerStreamProcessorsComodels2021}
    Let $\mathbb{T}$ and $\mathbb{T'}$ be theories.
    An {\it $\mathbb{T}$-residual $\mathbb{T'}$-comodel} is
    a comodel of $\mathbb{T'}$ in the Kleisli category $\mathope{Kl}(\mathbb{T})$.
\end{definition}

Spelling out the detail, an $\mathbb{T}$-residual $\mathbb{T'}$-comodel $\bm{S}$ comprises
an underlying set $S$ and a co-operation $\sem{\sigma}^{\bm{S}}:S \rightarrow T(|\sigma| \times S)$
for each symbol $\sigma \in \Sigma_{\mathbb{T'}}$.
That is, for each state $s \in S$ and each symbol $\sigma$,
we need to deal with an $\mathbb{T}$-computation
in order to decide what state we should transition to and
to extract an index in $|\sigma|$ from $\sem{\sigma}^{\bm{S}}(s)$.
The derived co-operation $\sem{t}^{\bm{S}}:S \rightarrow T(V \times S)$ of a term $t \in T'(V)$ is
calculated using composition in the Kleisli category:
\begin{align}
    &\sem{v}^{\bm{S}}(s)=(v,s) \in V \times S \subseteq T(V \times S)\\
    &\sem{\sigma(\lambda i.t_i)}^{\bm{S}}(s)=
    \sem{\sigma}^{\bm{S}}(s)(\lambda (i,s').\sem{t_i}^{\bm{S}}(s')),
\end{align}
where the term $\sem{\sigma}^{\bm{S}}(s)(\lambda (i,s').\sem{t_i}^{\bm{S}}(s')) \in T(V \times S)$ is
the substitution of {$(\sem{t_i}^{\bm{S}}(s'))_{(i,s')} \in T(V \times S)^{|\sigma| \times S}$}
to $\sem{\sigma}^{\bm{S}}(s) \in T(|\sigma| \times S)$.

By the above intuition about $\mathbb{T}$-residual $\mathbb{T'}$-comodels,
we expect that when we have a state $s$ of an $\mathbb{T}$-residual $\mathbb{T'}$-comodel $\bm{S}$ and
a state $m$ of an $\mathbb{T}$-comodel,
then, for each term $t \in T'(V)$,
we can evaluate $\sem{t}^{\bm{S}}(s) \in T(V \times S)$ with the initial state $m$
to a value-state pair in $V \times S$.
This idea is formalized as:

\begin{definition}
    \label{tensor in set}
    \cite{garnerStreamProcessorsComodels2021}
    Let $\mathbb{T}$, $\mathbb{T'}$ be theories.
    Let $\bm{S}$ be an $\mathbb{T}$-residual $\mathbb{T'}$-comodel,
    and let $\bm{M}$ be an $\mathbb{T}$-comodel.
    The {\it tensor product} $\bm{S}\cdot\bm{M}$ is the $\mathbb{T'}$-comodel
    with underlying set $S \times M$ and co-operations
    \[
        \sem{\sigma}^{\bm{S} \cdot \bm{M}}:
        S \times M \xrightarrow{\sem{\sigma}^{\bm{S}} \times M}
        T(|\sigma| \times S) \times M
        \xrightarrow{(t,m) \mapsto \sem{t}^{\bm{M}}(m)}
        |\sigma| \times S \times M.
    \]
\end{definition}

This construction can be generalized to the case that
$\bm{M}$ is a $\mathbb{T}$-comodel in a category $\mathscr{C}$ with copowers.
For an $\mathbb{T}$-residual $\mathbb{T'}$-comodel $\bm{S}$ and an $\mathbb{T}$-comodel $\bm{M}$ in $\mathscr{C}$,
there is a $\mathbb{T'}$-comodel in $\mathscr{C}$ whose underlying object is the copower $S \cdot M$.

If there is the final $\mathbb{T'}$-comodel $\bm{S}_{\mathbb{T'}}$ in $\mathscr{C}$,
$\mathword{Comod}(\mathbb{T'},\mathscr{C})(\bm{S \cdot M},\bm{S}_{\mathbb{T'}})$ has only one map $e$.
Now we consider the case of $\mathscr{C} = \cat$ and $\bm{M}$ is the final $\mathbb{T}$-comodel $\bm{S}_{\mathbb{T}}$.
By Corollary \ref{cor:1}, we have this $e:S \times S_{\mathbb{T}} \rightarrow S_{\mathbb{T'}}$ called the extent of $\bm{S}$.
Then its currying $\lambda s \in S.e(s,\_):S \rightarrow \cat(S_{\mathbb{T}},S_{\mathbb{T'}})$
translates elements of a residual comodel to continuous functions between two final comodels,
in particular, these functions are continuous on sub-basis.

One of our goals is to show that, for free theories $\mathbb{T}$ and $\mathbb{T'}$,
the currying of the extent of the {\it final $\mathbb{T}$-residual $\mathbb{T'}$-comodel} is surjective.
To define the final object, we first define the notion of morphism between residual comodels.
This is different from the usual morphism between comodels in the Kleisli category.

\begin{definition}
    \cite{garnerStreamProcessorsComodels2021}
    Let $\mathbb{T}$ and $\mathbb{T'}$ be theories,
    $\bm{S}$ and $\bm{U}$ be $\mathbb{T}$-residual $\mathbb{T'}$-comodels.
    {\it A map of residual comodels} $\bm{S} \rightarrow \bm{U}$ is
    a function $f:S \rightarrow U$ such that the following diagram commutes
    for each symbol $\sigma \in \Sigma_{\mathbb{T'}}$:
    \[
        \begin{tikzcd}
            S \ar[r, "\sem{\sigma}^{\bm{S}}"] \ar[d, swap, "f"] &
            {T(|\sigma| \times S)} \ar[d, "T(|\sigma| \times f)"]\\
            U \ar[r, swap, "\sem{\sigma}^{\bm{U}}"] &
            {T(|\sigma| \times U)}
        \end{tikzcd}
    \]
\end{definition}

Now we define the final $\mathbb{T}$-residual $\mathbb{T'}$-comodel $\bm{I}_{\mathbb{T},\mathbb{T'}}$
as the final object of the category of residual comodels and maps between residual comodels,
which is different from the final $\mathbb{T'}$-comodel in $\mathope{Kl}(\mathbb{T})$.

We call the currying of the extent of $\bm{I}_{\mathbb{T},\mathbb{T'}}$ the {\it reflection}.

\begin{definition}
    Let $\mathbb{T}$ and $\mathbb{T'}$ be free theories,
    $\bm{I}_{\mathbb{T},\mathbb{T'}}$ be the final $\mathbb{T}$-residual $\mathbb{T'}$-comodel.
    The {\it reflection} function is defined as currying of the extent $e$ of
    $\bm{I}_{\mathbb{T},\mathbb{T'}}$:
    \[
        {\textsf{reflect}}:I_{\mathbb{T},\mathbb{T'}} \rightarrow \cat(S_{\mathbb{T}},S_{\mathbb{T'}}) \qquad
        s \mapsto e(s,\_):S_{\mathbb{T}} \rightarrow S_{\mathbb{T'}}.
    \]
\end{definition}

Following examples explain how a state of residual comodel implements a function
and why we restrict our target to maps in $\cat$ (or straight functions in the introduction).

\begin{example}
    Let $A=\{0,1\}$ and $B = \{a,b\}$.
    Consider a function $A^{\Nat} \rightarrow B^{\Nat}$ which rewrites $0$ to $a$ and $1$ to $b$.
    We take theories $\mathbb{T}$ and $\mathbb{T'}$ as $\mathbb{T}_A$ and $\mathbb{T}_B$.
    The final $\mathbb{T}$-residual $\mathbb{T'}$-comodel $I_{\mathbb{T},\mathbb{T'}}$ has
    its residual comodel structure $\sem{\textbf{read}_B}:I_{\mathbb{T},\mathbb{T'}} \rightarrow T(B \times I_{\mathbb{T},\mathbb{T'}})$.
    Take a state $s \in I_{\mathbb{T},\mathbb{T'}}$ as
    \begin{equation}
        \sem{\textbf{read}_B}(s)=
        \begin{tikzpicture}[auto,baseline=-20pt]
            \node (root) at (0,0) {${\textbf{read}}_A$};
            \node (-0) at (-1,-1.5) {$(a,s)$};
            \node (-1) at (1,-1.5) {$(b,s)$};
            \draw (root) --node[swap]{$0$} (-0);
            \draw (root) --node{$1$} (-1);
        \end{tikzpicture}
    \end{equation}
    Then for a given stream, such as $1011 \cdots$ in $A^{\Nat}$, this state constructs a stream $y_1y_2 \cdots$ in $B^{\Nat}$ as follows:
    To compute the first digit $y_1$, it uses $\textbf{read}_B$ one time.
    The tree $\sem{\textbf{read}_B}(s)$ requires reading an $A$-element and this requirement is met by
    the given stream.
    So it consumes the first digit $1$ of the input and it determines that $y_1=b$.
    To compute $y_2$, it uses $\textbf{read}_B$ twice.
    The first $\textbf{read}_B$ is computed as above and it reaches the leaf $(b,s)$.
    The second one is applied to this new $s$ and now it consumes the second digit $0$ of the input.
    Thus it reaches $(a,s)$ and $y_2$ becomes $a$.
    The computation continues similarly and it will implement the function considered as above.
    We can implement more complex maps by deepening the tree or by using other states in leaves.
\end{example}

\begin{example}
    This example exhibits a function which cannot be implemented by states of residual comodels.
    Let $\mathbb{T}=\mathbb{T}^2_{\Nat}$ and $\mathbb{T'}=\mathbb{T}_{\Nat}$.
    Consider a function between final comodels $S_{\mathbb{T}} \rightarrow S_{\mathbb{T'}}$
    which takes the sum of each depth:
    \begin{equation}
        \begin{tikzpicture}[auto, baseline=-15pt]
            \node (0) at (-1,-1) {$n_1$};
            \node (1) at (1,-1) {$n_2$};
            \node (00) at (-1.5,-2) {$n_{11}$};
            \node (01) at (-0.5,-2) {$n_{12}$};
            \node (10) at (0.5,-2) {$n_{21}$};
            \node (11) at (1.5,-2) {$n_{22}$};
            \node at (-1,-2.3) {$\vdots$};
            \node at (1,-2.3) {$\vdots$};
            \draw (0,0) -- (0); \draw (0,0) -- (1);
            \draw (0) -- (00); \draw (0) -- (01);
            \draw (1) -- (10); \draw (1) -- (11);

            \node at (2,-1.5) {$\longmapsto$};

            \node (s1) at (4.5,-1) {$n_1+n_2$};
            \node (s2) at (4.5,-2) {$n_{11}+n_{12}+n_{21}+n_{22}$};
            \node at (4.5,-2.3) {$\vdots$};
            \draw (s1) -- (s2);
        \end{tikzpicture}
    \end{equation}
    If this can be implemented by a state $s \in I_{\mathbb{T},\mathbb{T'}}$,
    the first digit $n_1+n_2$ of the output is computed by the term $\sem{\textbf{read}_{\Nat}}(s) \in T^{(2)}_{\Nat}(\Nat \times I_{\mathbb{T},\mathbb{T'}})$.
    There are three cases; (i) $\sem{\textbf{read}_{\Nat}}(s)$ is a variable $(n,s') \in \Nat \times I_{\mathbb{T},\mathbb{T'}}$,
    (ii) $\sem{\textbf{read}_{\Nat}}(s)$ is of the form $\textbf{read}^1_{\Nat}(\lambda n.u_n)$ and
    (iii) $\sem{\textbf{read}_{\Nat}}(s)$ is of the form $\textbf{read}^2_{\Nat}(\lambda n.u_n)$.
    When (i), its output is always $n$ and thus this cannot depend on $n_1,n_2$.
    When (ii), it reads $n_1$ of the input tree, selects the term $u_{n_1}$
    and, if $u_{n_1}$ requires further input, it uses the tree under $n_1$.
    So, in this case, the output cannot depend on $n_2$.
    Similarly, when (iii), the output cannot depend on $n_1$.
    Consequently, the term $\sem{\textbf{read}_{\Nat}}(s)$ cannot observe both of $n_1$ and $n_2$
    and thus the function summing up each depth cannot be implemented by residual comodels.
\end{example}

We will show the surjectivity of the reflection by characterizing $\cat(S_{\mathbb{T}},S_{\mathbb{T'}})$ as the {\it final $\mathbb{T}$-$\mathbb{T'}$-bimodel}.
Here, for theories $\mathbb{T}$ and $\mathbb{T'}$,
the category of  $\mathbb{T}$-$\mathbb{T'}$-bimodels is the category of $\mathbb{T'}$-comodels in $\mathword{Mod}(\mathbb{T})$, i.e., $\mathword{Comod}(\mathbb{T'},\mathword{Mod}(\mathbb{T}))$.
We only describe the most important properties for our purpose; bimodels can be seen as residual comodels.

\begin{lemma}
    \label{lem:bimodel}
    Let $\mathbb{T}$ be any theory.
    For any $\mathbb{T}$-model $\bm{X}=(X,\sem{\_}_{\bm{X}})$ and set $B$,
    the copower $B \cdot \bm{X}$ is the quotient of the free model $\bm{T}(B \times X)$
    for an $\mathbb{T}$-congruence relation.

    Especially, if $\mathbb{T}$ is a free theory,
    we can take a canonical representative of each equivalence class
    and thus $B \cdot X$ may be regarded as a subset of $T(B \times X)$.
    In detail, the set of canonical representatives coincides with the set of terms in $T(B \times X)$
    which have no non-trivial sub-terms whose variables are labelled by the same element of $B$,
    in other words, banned sub-terms are of the form $\sigma(\lambda i.(b,x_i))$ for $b \in B$.
    In this case, the $\mathbb{T}$-model structure of $B \cdot \bm{X}$ is that of $T(B \times X)$
    except that $\sem{\sigma}_{B \cdot \bm{X}}(\lambda i.(b,x_i))=(b,\sem{\sigma}_{\bm{X}}(\lambda i.x_i))$.
\end{lemma}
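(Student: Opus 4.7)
The plan is to build $B \cdot \bm{X}$ explicitly as a quotient of $\bm{T}(B \times X)$ and, in the free case, to exhibit canonical representatives via a confluent rewriting system. For the first part, let $\sim$ be the smallest $\mathbb{T}$-congruence on $\bm{T}(B \times X)$ containing
\[
    \sigma(\lambda i.(b,x_i)) \sim (b,\sem{\sigma}_{\bm{X}}(\lambda i.x_i))
\]
for every $\sigma \in \Sigma_{\mathbb{T}}$, $b \in B$ and $(x_i) \in X^{|\sigma|}$. Set $\iota_b(x) \coloneqq [(b,x)] \in \bm{T}(B \times X)/{\sim}$; these are $\mathbb{T}$-model morphisms precisely by the chosen generating pairs. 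Given any cocone of model morphisms $f_b:\bm{X} \to \bm{Y}$, Lemma \ref{free model} lifts $(b,x)\mapsto f_b(x)$ to a unique model morphism $\tilde f^\dagger:\bm{T}(B \times X) \to \bm{Y}$; since each $f_b$ commutes with operations, $\tilde f^\dagger$ identifies the two sides of every generating pair and therefore descends to a unique $\bm{T}(B \times X)/{\sim} \to \bm{Y}$. Uniqueness of the factoring map is automatic since the images of the $\iota_b$'s generate the quotient as a $\mathbb{T}$-model, so $\bm{T}(B \times X)/{\sim}$ is the copower.

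For the free case, I introduce the rewrite relation $\rightsquigarrow$ on $T(B \times X)$ consisting of applications of the single rule schema $\sigma(\lambda i.(b,x_i)) \rightsquigarrow (b,\sem{\sigma}_{\bm{X}}(\lambda i.x_i))$ at any subterm position; the normal forms are then exactly the terms described in the statement. Each rewrite step strictly decreases the total number of internal operation nodes of the term, so $\rightsquigarrow$ is strongly normalizing. Because $E_{\mathbb{T}} = \emptyset$, the $\mathbb{T}$-congruence $\sim$ is simply the congruence closure of the generating pairs and hence agrees with the symmetric--transitive closure of $\rightsquigarrow$; consequently every equivalence class contains a normal form, and to conclude it suffices to prove confluence.

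Confluence is the one delicate step and where the main obstacle lies. Observe, however, that every immediate subterm of a redex $\sigma(\lambda i.(b,x_i))$ is a variable and so is not itself a redex; consequently two distinct redex occurrences in a term must occupy disjoint positions, and local confluence follows because the two rewrites simply commute. Newman's lemma combined with strong normalization then yields global confluence and a unique normal form in each $\sim$-class. (New redexes may appear above a contracted subterm when its siblings happen to share the same $B$-label, but this is irrelevant to local confluence of first steps.) Finally, the model structure on the set of normal forms is read off by computing, for a symbol $\sigma$ and normal forms $(s_i)_{i \in |\sigma|}$, the normal form of $\sigma(\lambda i.s_i)$: it is $\sigma(\lambda i.s_i)$ itself unless every $s_i$ is a leaf $(b,x_i)$ with a common $b$, in which case the unique available rewrite produces $(b,\sem{\sigma}_{\bm{X}}(\lambda i.x_i))$, matching the formula in the statement.
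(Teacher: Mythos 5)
Your proof is correct and follows essentially the same route as the paper: the same generating congruence $\sigma(\lambda i.(b,x_i)) \sim (b,\sem{\sigma}_{\bm{X}}(\lambda i.x_i))$ for the copower, and in the free case the same left-to-right orientation yielding a strongly normalizing rewrite system whose normal forms serve as canonical representatives. You additionally spell out local confluence (via disjointness of redex positions) and the universal property, details the paper's proof leaves implicit.
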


\begin{proof}
    Define a $\mathbb{T}$-congruence $\sim$ on $\bm{T}(B \times X)$ as
    the minimal congruence satisfying
    \begin{equation}
        \label{eq:6}
        \sigma(\lambda i.(b,x_i)) \sim (b,\sem{\sigma}_{\bm{X}}(\lambda i.x_i)).
    \end{equation}
    for all symbols $\sigma \in \Sigma_{\mathbb{T}}$.
    The quotient $\bm{T}(B \times X) / \sim$ satisfies universality of
    the copower $B \cdot \bm{X}$.

    If $\mathbb{T}$ is a free theory,
    by orienting read \eqref{eq:6} from left to right,
    this determines a strongly normalizing rewrite system on $T(B \times X)$;
    if there is a sub-term of the form $\sigma(\lambda i.(b,x_i))$ then
    rewrite this into the variable $(b,\sem{\sigma}_{\bm{X}}(\lambda i.x_i))$.
    Thus we can take the normal forms as representatives of equivalence classes.
\end{proof}

\begin{remark}
    The latter of this lemma is justified
    because the set of $\mathbb{T}$-terms $T(B \times X)$ coincides with
    the set of $\Sigma_{\mathbb{T}}$-terms $\Sigma_{\mathbb{T}}(B \times X)$,
    whose elements are trees with $\Sigma_{\mathbb{T}}$-labelled nodes and
    $(B \times X)$-labelled leaves.
    When $\mathbb{T}$ has non-trivial equations,
    $T(B \times X)$ is a quotient of $\Sigma_{\mathbb{T}}(B \times X)$.
    Thus we must argue about rewriting systems on a quotient set and
    cannot generalize this lemma simply.
\end{remark}

\begin{definition}
    \label{def:1}
    Let $\mathbb{T}$ and $\mathbb{T'}$ be free theories
    and $\bm{K}$ be a $\mathbb{T}$-$\mathbb{T'}$-bimodel.
    We define the $\mathbb{T}$-residual $\mathbb{T'}$-comodel
    $\bm{\check K}=(K,\sem{\_}^{\bm{\check K}})$ whose co-operations are
    the composites
    \[
        \sem{\sigma}^{\bm{\check K}}: K \xrightarrow{\sem{\sigma}^{\bm{K}}} |\sigma| \cdot K
        \hookrightarrow T(|\sigma| \times K)
    \]
    of the $\mathbb{T'}$-comodel structure map with the inclusion of Lemma \ref{lem:bimodel}.
\end{definition}

\begin{definition}
    Let $\mathbb{T}$ and $\mathbb{T'}$ be free theories
    and $\bm{E}_{\mathbb{T},\mathbb{T'}}$ be the final $\mathbb{T}$-$\mathbb{T'}$-bimodel.
    Then we have the unique $\mathbb{T}$-residual $\mathbb{T'}$-comodel map
    $\bm{\check{E}}_{\mathbb{T},\mathbb{T'}} \rightarrow \bm{I}_{\mathbb{T},\mathbb{T'}}$
    and we define the {\it reification} function as its underlying map:
    \[{\textsf{reify}}:E_{\mathbb{T},\mathbb{T}} \rightarrow I_{\mathbb{T},\mathbb{T'}}.\]
\end{definition}

If we characterize $\cat(S_{\mathbb{T}},S_{\mathbb{T'}})$ as the final $\mathbb{T}$-$\mathbb{T'}$-bimodel
and if we show  $\textsf{reflect} \circ \textsf{reify}$ is the identity map,
we can conclude that $\textsf{reflect}$ is surjective.

\section{The Final Bimodel of Free Theories}
\label{sec:6}

In this section, we assume that $\mathbb{T}$ and $\mathbb{T'}$ are free theories
(the only exception is in Proposition \ref{prop:1}).
We write $\bm{S}_{\mathbb{T}}$ and $\bm{S}_{\mathbb{T'}}$ for their final comodels
and we regard them as objects of $\cat$ with their operational sub-bases $\Phi_{\mathbb{T}}$ and $\Phi_{\mathbb{T'}}$.

Our goal in this section is to prove that $\cat(S_{\mathbb{T}},S_{\mathbb{T'}})$ is the final $\mathbb{T}$-$\mathbb{T'}$-bimodel.
This is verified by proving that the functor $\cat(\bm{S}_{\mathbb{T}},\_):\cat \rightarrow \mathword{Mod}(\mathbb{T})$
preserves the final $\mathbb{T'}$-comodel.
Here we assert that $\cat(\bm{S}_{\mathbb{T}},\_)$ is actually a functor to $\mathword{Mod}(\mathbb{T})$.
For each object $X \in \cat$,
the $\mathbb{T}$-model structure on $\cat(S_{\mathbb{T}},X)$ is
\begin{equation}
    \label{structure}
    \begin{array}{rccc}
        \sem{\sigma}=\mathope{split}_{\sigma}:&
        \cat (S_{\mathbb{T}}, X)^{|\sigma|} & \rightarrow & \cat (S_{\mathbb{T}}, X)\\
        & \rotatebox{90}{$\in$} & & \rotatebox{90}{$\in$}\\
        & (f_i)_{i \in |\sigma|} & \mapsto &
        \begin{array}{ccc}
            S_{\mathbb{T}} & \rightarrow & X\\
            \rotatebox{90}{$\in$} & & \rotatebox{90}{$\in$}\\
            s & \mapsto & f_{o_{\sigma}s}(\partial_{\sigma}s)\\
        \end{array}\\
    \end{array}
\end{equation}
for each $\sigma \in \Sigma_{\mathbb{T}}$.

The outline of the proof is as follows:
\begin{enumerate}[label={(\Roman*)}]
    \item The functor $\cat(\bm{S}_{\mathbb{T}},\_):\cat \rightarrow \mathword{Mod}(\mathbb{T})$ has a left adjoint.
    \item $\cat(\bm{S}_{\mathbb{T}},\_)$ preserves copowers of objects which have a {\it simple} (see Definition \ref{def:2}) {\it $\mathbb{T'}$-comodel} structure.
    \item The final $\mathbb{T'}$-comodel $\bm{S}_{\mathbb{T'}}$ (endowed with $\Phi_{\mathbb{T'}}$)
    is a {\it simple} $\mathbb{T'}$-comodel.
    \item Conclude the claim by using adjointness in (I).
\end{enumerate}

(I) is established in \cite{freydAlgebraValuedFunctors1966} and \cite{garnerStreamProcessorsComodels2021}.
We cite the statement from \cite{garnerStreamProcessorsComodels2021}.

\begin{proposition}
    \label{prop:1}
    Let $\mathbb{T}$ be a theory (which is not necessarily free).
    Let $\mathscr{C}$ be a category with copowers and
    $\bm{S}$ a $\mathbb{T}$-comodel in $\mathscr{C}$.
    For any object $C \in \mathscr{C}$,
    the hom-set $\mathscr{C}(S,C)$ bears a structure of $\mathbb{T}$-model
    $\mathscr{C}(\bm{S},C)$ with operations
    \begin{equation}
        \label{eq:7}
        \sem{\sigma}_{\mathscr{C}(\bm{S},C)}(\lambda i. S \xrightarrow{f_i}C)=
        S \xrightarrow{\sem{\sigma}^{\bm{S}}} |\sigma| \cdot S
        \xrightarrow{\lrangle{f_i}_{i \in |\sigma|}} C
    \end{equation}
    where $\lrangle{f_i}_{i \in |\sigma|}$ is the copairing of the $f_i$'s.
    As $C$ varies, this assignment underlies a functor
    $\mathscr{C}(\bm{S},\_) : \mathscr{C} \rightarrow \mathword{Mod}(\mathbb{T})$.
    If $\mathscr{C}$ is cocomplete, this functor has
    a left adjoint $(\_) \otimes \bm{S} : \mathword{Mod}(\mathbb{T}) \rightarrow \mathscr{C}$.
\end{proposition}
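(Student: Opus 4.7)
The plan is to tackle the proposition in three stages: verifying that the stated operations define a $\mathbb{T}$-model structure, confirming functoriality in $C$, and finally constructing the left adjoint under the cocompleteness hypothesis.

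For the first stage, I would prove by induction on term structure that, for any $t \in T(V)$ and any family $(f_v : S \to C)_{v \in V}$,
\[
    \sem{t}_{\mathscr{C}(\bm{S},C)}(\lambda v. f_v) = \lrangle{f_v}_{v \in V} \circ \sem{t}^{\bm{S}}.
\]
The variable case is immediate from the copower universal property (a copairing composed with the $v$-th injection is $f_v$), and the symbol case unfolds both sides using the inductive hypothesis together with the definition of the derived co-operation on $\bm{S}$. Given this identity, any equation $t = u$ of $\mathbb{T}$ holding co-operationally on $\bm{S}$ (which it does, since $\bm{S}$ is a comodel) transports to $\sem{t}_{\mathscr{C}(\bm{S},C)} = \sem{u}_{\mathscr{C}(\bm{S},C)}$, so $\mathscr{C}(\bm{S},C)$ is genuinely a $\mathbb{T}$-model. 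Functoriality in $C$ is a one-line check: for $g : C \to C'$, post-composition with $g$ commutes with copairings, so the map $g \circ (\_) : \mathscr{C}(S,C) \to \mathscr{C}(S,C')$ intertwines each $\sem{\sigma}$ and is therefore a $\mathbb{T}$-model homomorphism.

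For the adjoint, assuming $\mathscr{C}$ cocomplete, I would first fix the left adjoint on free models and then extend by colimit. Setting $\bm{T}(V) \otimes \bm{S} \coloneqq V \cdot S$, the required bijection
\[
    \mathscr{C}(V \cdot S, C) \cong \mathword{Set}(V, \mathscr{C}(S, C)) \cong \mathword{Mod}(\mathbb{T})(\bm{T}(V), \mathscr{C}(\bm{S}, C))
\]
is just the copower universal property composed with the freeness of $\bm{T}(V)$ from Lemma \ref{free model}, and naturality in $C$ is routine. To extend to an arbitrary $\mathbb{T}$-model $\bm{M}$, I would use the canonical reflexive-coequalizer presentation of $\bm{M}$ by free models arising from the monadicity of $\mathword{Mod}(\mathbb{T}) \to \mathword{Set}$, and define $\bm{M} \otimes \bm{S}$ as the corresponding coequalizer in $\mathscr{C}$, which exists by cocompleteness.

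The verification that $(\_) \otimes \bm{S}$ is truly left adjoint to $\mathscr{C}(\bm{S}, \_)$ then reduces to the free case already handled: a morphism $\bm{M} \otimes \bm{S} \to C$ in $\mathscr{C}$ corresponds via the coequalizer property to a morphism $M \cdot S \to C$ equalizing the two composites, which via the free case corresponds to a $\mathbb{T}$-model map $\bm{T}(M) \to \mathscr{C}(\bm{S}, C)$ coequalizing the presentation diagram, i.e., to a $\mathbb{T}$-model map $\bm{M} \to \mathscr{C}(\bm{S}, C)$. The main obstacle I anticipate is the bookkeeping in this last step: confirming that the coequalizer defining $\bm{M} \otimes \bm{S}$ is correctly transported through the hom-functor and matches the presentation of $\bm{M}$ in $\mathword{Mod}(\mathbb{T})$. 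The cleanest path may simply be to invoke Freyd's theorem from \cite{freydAlgebraValuedFunctors1966} directly, after observing that $\mathscr{C}(\bm{S}, \_)$ preserves limits (since limits in $\mathword{Mod}(\mathbb{T})$ are computed on underlying sets and hom preserves limits), so that the adjoint functor theorem applies given that $\mathword{Mod}(\mathbb{T})$ is locally presentable.
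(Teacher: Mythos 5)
The paper does not actually prove this proposition --- it cites \cite{freydAlgebraValuedFunctors1966} and \cite{garnerStreamProcessorsComodels2021} for it --- and your argument is essentially the standard construction found in those references. The inductive identity $\sem{t}_{\mathscr{C}(\bm{S},C)}(\lambda v.f_v)=\lrangle{f_v}_{v\in V}\circ\sem{t}^{\bm{S}}$ is exactly the right device for transporting the comodel equations of $\bm{S}$ to model equations on the hom-set, and setting $\bm{T}(V)\otimes\bm{S}=V\cdot S$ and then extending along the canonical free presentation $\bm{T}(T(M))\rightrightarrows\bm{T}(M)\to\bm{M}$ by a coequalizer in $\mathscr{C}$ is precisely Freyd's tensor product; this route uses only the stated hypothesis that $\mathscr{C}$ is cocomplete. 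The one thing I would strike is your closing suggestion that the ``cleanest path'' is the adjoint functor theorem: to produce a \emph{left} adjoint to $\mathscr{C}(\bm{S},\_)$ by (G)AFT you would need $\mathscr{C}$ to be complete together with a solution-set condition on the $\mathscr{C}$ side, neither of which is assumed here (only cocompleteness of $\mathscr{C}$ is), and local presentability of the \emph{codomain} $\mathword{Mod}(\mathbb{T})$ is not the relevant hypothesis for that direction of adjoint. So keep the explicit coequalizer construction as the actual proof; its only delicate point is the bookkeeping you already flagged, namely that the two parallel maps $T(M)\cdot S\rightrightarrows M\cdot S$ must be taken to be the copower of the algebra structure map and the copairing of the derived co-operations $\sem{t}^{\bm{S}}$, so that under the free-case bijection they correspond to the two legs of the presentation of $\bm{M}$.
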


\begin{remark}
    The $\mathbb{T}$-model structure \eqref{structure} on $\cat(S_{\mathbb{T}},X)$
    is given by this proposition with the ordinary $\mathbb{T}$-comodel structure of $\bm{S}_{\mathbb{T}}$.
\end{remark}

(III) is clear from the definition of {\it simplicity} and (IV) is shown as follows:

When (I),(II) and (III) have been verified, then we have the adjunction in (I) as
\[(\_) \otimes \bm{S}_{\mathbb{T}} \dashv \cat(\bm{S}_{\mathbb{T}},\_):\cat \rightarrow \mathword{Mod}(\mathbb{T}).\]
Since the left adjoint $(\_) \otimes \bm{S}_{\mathbb{T}}$ preserves copowers and
$\cat(\bm{S}_{\mathbb{T}},\_)$ also preserves copowers of objects $X$ in $\cat$
with a {\it simple} $\mathbb{T'}$-comodel structure,
we have the following isomorphism for an arbitrary object $\bm{Y}$ in $\mathword{Comod}(\mathbb{T'},\mathword{Mod}(\mathbb{T}))$,
\[\mathword{Comod}(\mathbb{T'},\mathword{Mod}(\mathbb{T}))(\bm{Y},\cat(\bm{S}_{\mathbb{T}},\bm{X})) \cong \mathword{Comod}(\mathbb{T'},Sub)(\bm{Y} \otimes \bm{S}_{\mathbb{T}},\bm{X}).\]
The final $\mathbb{T'}$-comodel $\bm{S}_{\mathbb{T'}}$ is {\it simple}.
So we let $\bm{X}=\bm{S}_{\mathbb{T'}}$ in above, then
\[\mathword{Comod}(\mathbb{T'},\mathword{Mod}(\mathbb{T}))(\bm{Y},\cat(\bm{S}_{\mathbb{T}},\bm{S}_{\mathbb{T'}})) \cong \mathword{Comod}(\mathbb{T'},Sub)(\bm{Y} \otimes \bm{S}_{\mathbb{T}},\bm{S}_{\mathbb{T'}}).\]
By finality of $\bm{S}_{\mathbb{T'}}$, the right hand side is a singleton.
Thus, $\cat(\bm{S}_{\mathbb{T}},\bm{S}_{\mathbb{T'}})$ is final in $\mathword{Comod}(\mathbb{T'},\mathword{Mod}(\mathbb{T}))$.

\vsp

Therefore we will concentrate on (II).
First, we define the notion of {\it simple comodels} appearing in (II).

\begin{definition}
    \label{def:2}
    We say a comodel $\bm{S}$ is {\it simple} if observationally equivalent states are actually identical,
    or explicitly, if $\bm{S}$ satisfies following condition for all states $s,s'$:
    \[
        \begin{array}{l}
            \text{if}\ 
            o_{\sigma}(\partial_{\sigma_n}\cdots\partial_{\sigma_1}(s))=
            o_{\sigma}(\partial_{\sigma_n}\cdots\partial_{\sigma_1}(s'))\\
            \text{for all sequences of symbols}\ \sigma_1, \ldots, \sigma_n\ \text{and for all}\ \sigma,\\
            \text{then}\ s=s'.
        \end{array}
    \]
\end{definition}

This definition says that a comodel is simple iff it has no proper quotient
(this is the original definition of simple comodels in \cite{ruttenUniversalCoalgebraTheory2000}).
The final comodel is clearly a simple comodel (this is a justification of (III)).

The statement (II) says that,
for a $\mathbb{T'}$-comodel $((X,\Phi),\sem{\_}^{\bm{X}})$ in $\cat$ whose
underlying comodel $(X,\sem{\_}^{\bm{X}})$ is simple,
we have the copower $(\iota_i:X \rightarrow I \cdot X)_{i \in I}$ in $\cat$,
then maps given by applying the functor $\cat(\bm{S}_{\mathbb{T}},\_)$ to each $\iota_i$
\begin{equation}
    \label{maps}
    (\iota_i \circ (\_):\cat(\bm{S}_{\mathbb{T}},X) \rightarrow \cat(\bm{S}_{\mathbb{T}},I \cdot X))_{i \in I}
\end{equation}
constitute a copower cocone in $\mathword{Mod}(\mathbb{T})$.
We prove this in two steps: uniqueness of the mediating morphism and existence of it.

The difficult part is to prove its uniqueness and this is rephrased as following.
\begin{theorem}[uniqueness of the mediating morphism]
    \label{main thm}
    Let $((X,\Phi),\sem{\_}^{\bm{X}})$ be a $\mathbb{T'}$-comodel in $\cat$ whose
    underlying comodel $(X,\sem{\_}^{\bm{X}})$ is simple.
    For its copower $(\iota_i:X \rightarrow I \cdot X)_{i \in I}$ in $\cat$,
    the family of maps given by applying the functor $\cat(\bm{S}_{\mathbb{T}},\_)$ to each $\iota_i$
    \begin{equation}
        (\iota_i \circ (\_):\cat(\bm{S}_{\mathbb{T}},X) \rightarrow \cat(\bm{S}_{\mathbb{T}},I \cdot X))_{i \in I}
    \end{equation}
    is jointly epimorphic in the category $\mathword{Mod}(\mathbb{T})$
    (that is, if $\mathbb{T}$-model maps $f,g:\cat(\bm{S}_{\mathbb{T}},I \cdot X) \rightarrow \bm{Y}$ satisfy
    $f \circ (\iota_i \circ (\_))=g \circ (\iota_i \circ (\_))$ for all $i \in I$,
    then $f=g$).
\end{theorem}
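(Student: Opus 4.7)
The plan is to reduce the joint epimorphism claim to a generation statement: let $A \subseteq \cat(\bm{S}_{\mathbb{T}}, I \cdot X)$ be the sub-$\mathbb{T}$-model generated by $G = \{\iota_i \circ h : i \in I,\ h \in \cat(\bm{S}_{\mathbb{T}}, X)\}$. Then $\{k : f(k) = g(k)\}$ is a sub-$\mathbb{T}$-model (since $f,g$ are $\mathbb{T}$-model homomorphisms) containing $G$ by hypothesis, so it contains $A$; hence it suffices to prove $A = \cat(\bm{S}_{\mathbb{T}}, I \cdot X)$.

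First I would unpack the sub-basis on the copower $I \cdot X$ in $\cat$. A natural choice that makes the inclusions continuous and validates the universal property is $\Psi = \{\bigsqcup_{i\in I}\iota_i(U_i) : (U_i)_{i\in I} \in \Phi^I\}$. Writing a continuous $k$ as $k(s) = (i^{\circ}(s), \xi(s))$, taking $U_{i_0}=X$ and $U_i=\emptyset$ for $i \neq i_0$ shows $(i^{\circ})^{-1}(i_0) = k^{-1}(\iota_{i_0}(X)) \in \Phi_{\mathbb{T}}$, hence equals $[t_{i_0} \mapsto v_{i_0}]$ for some $\mathbb{T}$-term and variable, while taking all $U_i = U$ shows $\xi : S_{\mathbb{T}} \to X$ is continuous. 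Crucially, sub-basic opens $[t \mapsto v] \in \Phi_{\mathbb{T}}$ trace single root-to-leaf paths through $t$, so continuity in $\cat$ forces $k$ to extract only \emph{serial} information from $s$, never parallel information about sibling sub-trees.

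I would then prove $k \in A$ by induction on a complexity measure of the witnessing terms (e.g., maximum depth). In the base case, when $i^{\circ}$ is constant with value $i_0$, we have $k = \iota_{i_0} \circ \xi \in G$ since $\xi$ is continuous by the previous paragraph. In the inductive step, the seriality of $\Phi_{\mathbb{T}}$ lets one extract a common root operation $\sigma \in \Sigma_{\mathbb{T}}$ for all witnessing observations $k$ makes at the top: disjointness of the level sets $(i^{\circ})^{-1}(i_0)$ forces them to share a root, and the continuity constraint on $k^{-1}(\bigsqcup_i \iota_i(U_i))$ couples $i^{\circ}$-witnesses and $\xi$-witnesses so they too must share that root (otherwise a genuinely parallel observation would arise, contradicting seriality). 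Then $k(s)$ depends only on $(o_\sigma(s), \partial_\sigma(s))$, and $k = \mathope{split}_\sigma((k_j)_{j \in |\sigma|})$ with $k_j(s') := k(\tilde s)$ for any $\tilde s$ satisfying $o_\sigma(\tilde s) = j$ and $\partial_\sigma(\tilde s) = s'$; each $k_j$ is continuous of strictly smaller depth.

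The main obstacle is precisely this inductive step: aligning the potentially infinite family of witnessing terms along a single top-level root $\sigma$, and verifying well-definedness of the branches $k_j$. Simplicity of $\bm{X}$ is essential for the latter: different liftings $\tilde s$ of $(j, s')$ agree on all serial observations passing through the chosen $\sigma$-branch, hence by continuity of $\xi$ produce operationally equivalent states in $\bm{X}$, which by simplicity must coincide. The free-theory assumption, via Lemma \ref{lem:bimodel}'s canonical normal forms, is what makes $\mathope{split}_\sigma$ an honest $\mathbb{T}$-model operation on the hom-set and sidesteps the quotient complications that would appear for theories with non-trivial equations.
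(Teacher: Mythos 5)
Your reduction to the generation statement (the equalizer of $f,g$ is a sub-$\mathbb{T}$-model containing the images of the $\iota_i\circ(\_)$, so it suffices to show these images generate all of $\cat(\bm{S}_{\mathbb{T}}, I\cdot X)$) is exactly the paper's first move, and your base case and your use of simplicity to get well-definedness of the branches $k_j$ are also in the spirit of the paper's argument. But the inductive step has a genuine gap: your proposed measure, ``maximum depth of the witnessing terms,'' is not well-founded in general. When $I$ is infinite, a continuous $k$ has infinitely many witnessing terms $t_{i_0}$ with $k^{-1}(\iota_{i_0}(X))=[t_{i_0}\mapsto v_{i_0}]$, and nothing bounds their depths a priori (the space $S_{\mathbb{T}}$ need not be compact when arities are infinite, so a partition into sub-basic opens can be infinite with unbounded witnesses). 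This is precisely where the paper does its hardest work: it inducts instead on the cardinality of $I_f=\{i: \iota_i(X)\cap f(S_{\mathbb{T}})\neq\emptyset\}$, handles finite $I_f$ by showing the shifted maps $f^p$ satisfy $I_{f^p}\subsetneq I_f$, and then devotes the entire second half of the proof to the infinite case --- assuming the splitting procedure never terminates, it assembles the resulting infinite sequence of paths into a single state $^{\overrightarrow{(\sigma,i)}}s$, shows $f$ of that state avoids every $\iota_{l_j}(X)$, and derives a contradiction because the witnessing term for the index it does land in would need an infinite path, violating continuity on sub-basis. Your proposal has no substitute for this step.

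A secondary issue is the claim that all witnessing observations can be aligned along a single common root symbol $\sigma$. This is asserted, not proved, and it is not innocuous: the paper deliberately avoids needing it by choosing \emph{one} index $i$, splitting $f$ along the structure of its single witnessing term $t^{(i)}$ via the $[f,t]$ construction, and using the states $^{\sigma,j}s$ to remember the consumed path, so that witnessing terms for other indices (possibly rooted at different symbols) are simply carried along unchanged. Even if the common-root claim can be established for a partition of $S_{\mathbb{T}}$ into sub-basic opens, you would still need to show that the depth of \emph{every} witnessing term strictly decreases after one split, which fails for terms not rooted at the chosen $\sigma$; so the alignment claim is load-bearing for your induction and would need a careful proof.
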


Here we describe an overview of the proof:
\begin{enumerate}[label={(\roman*)}]
    \item Let $M$ be the subset of $\cat(S_{\mathbb{T}},I \cdot X)$ which is generated by
    the image of maps \eqref{maps} and
    $\mathbb{T}$-model structure maps on $\cat(S_{\mathbb{T}},I \cdot X)$ \eqref{structure}.
    \item Then it suffices to show that $M=\cat(S_{\mathbb{T}},I \cdot X)$,
    i.e., each map $f \in \cat(S_{\mathbb{T}},I \cdot X)$ can be expressed by $\mathope{split}_{\sigma}$'s and
    maps $g$ whose image $g(S_{\mathbb{T}})$ is a subset of $\iota_i(X)$ for some $i$.
    \item We show (ii) by induction on the number of indices $i \in I$ such that
    $\iota_i(X) \cap f(S_{\mathbb{T}}) \neq \emptyset$ (we define $I_f$ as the set of such indexes).
    \item When $|I_f|=1$ (in particular $|I|=1$), there is nothing to do.
    \item When $|I_f| > 1$, we will show that $f$ can be expressed by $\mathope{split}_{\sigma}$ and
    maps $g$ which satisfy $I_g \subsetneq I_f$.
    \item If $I_f$ is finite, the above argument in fact shows $f \in M$.
    \item To complete the proof, we will show that even if $I_f$ is infinite, the situation comes to the finite case.
\end{enumerate}

(We ignore the case of $|I_f|=0$ since it is equivalent to $|I|=0$.)

The key points are how to express $f$ by using $\mathope{split}_{\sigma}$ as (v) and
how to show (vii).
Firstly, argue about (v).
When the theory $\mathbb{T}$ is $\mathbb{T}_A$ (the case Garner dealt with),
this has only one symbol $\textbf{read}_A$ and the final comodel $S_{\mathbb{T}}$ is $A^{\Nat}$.
For a given map $f$, the map $\mathope{split}(\lambda a \in A. f(a \_))$ is equal to $f$:
for a given sequence $\overrightarrow{a}$,
the operation $\mathope{split}$ separates it into the head $a_0$ and the tail $a_1 a_2 \cdots$,
then $\lambda a. f(a \_)$ simply reconnects them (and apply $f$).

When the final comodel consists of trees (when $\mathbb{T}$ has more than two symbols), the situation is not so simple.
For a given $f$, we want an expression $f = \mathope{split}_{\sigma}(\lambda i \in |\sigma|.f_i)$.
There are many problems.
First, we have to select $\sigma$.
Second, we should define appropriate $f_i$'s.
Finally, for a given tree $s$, $\mathope{split}_{\sigma}$ takes only $o_{\sigma}(s)$ and $\partial_{\sigma}(s)$
and forgets other information.
Thus, no matter how we select $\sigma$ and $f_i$'s, we cannot fully recover original tree $s$.
So, we focus on recovering the image $f(s)$.

Our solution is taking an index $i \in I$ such that $\iota_i(X) \cap f(S_{\mathbb{T}}) \neq \emptyset$
and judging whether $f(s) \in \iota_i(X)$ or not.
Now $\iota_i(X)$ is in the sub-basis of the copower $I \cdot X$.
Therefore, $f^{-1}(\iota_i(X))$ has the form $[t^{(i)} \mapsto v^{(i)}]$ for some term $t^{(i)}$ and some valuable $v^{(i)}$.
This says that we can judge whether $s \in f^{-1}(\iota_i(X))$ or not by observing {\it the path of $t^{(i)}$ along $s$}.

\begin{definition}
    For a free theory $\mathbb{T}$ and a term $t\in T(V)$,
    the set of {\it paths} of the term $t$, $\mathope{Path}(t)$, is defined as follows.
    Each path is a sequence of pairs $(\sigma,i)$ of a symbol $\sigma$ in $\Sigma_{\mathbb{T}}$ and $i \in |\sigma|$.
    If $t$ is a variable $v$, then the only path of $v$ is the empty sequence.
    A sequence $(\sigma_1,i_1)\cdots(\sigma_n,i_n)$ is in $\mathope{Path}(t)$ if
    $t$ is of the form $\sigma_1(\lambda i.t_i)$ and
    $(\sigma_2,i_2)\cdots(\sigma_n,i_n) \in \mathope{Path}(t_{i_1})$.
\end{definition}

\begin{definition}
    Let $t \in T(V)$ and $s$ be a state in $S_{\mathbb{T}}$.
    Define inductively {\it the path of $t$ along $s$}, $\mathope{path}(t,s) \in \mathope{Path}(t)$:
    \[
        \begin{array}{lll}
            \mathope{path}(v,s) & \coloneqq & \epsilon, \\
            \mathope{path}(\sigma(\lambda i.t_i),s) & \coloneqq &
            (\sigma,o_{\sigma}(s)) \cdot \mathope{path}(t_{o_{\sigma}(s)},\partial_{\sigma}(s)). \\
        \end{array}
    \]
\end{definition}

Obviously, if $\mathope{path}(t^{(i)},s)=\mathope{path}(t^{(i)},s')$ then $f(s) \in \iota_i(X)$ iff $f(s') \in \iota_i(X)$.
Thus, we select $\sigma$ and $f_i$'s as remembering the information about $\mathope{path}(t^{(i)},s)$ for given $s$.

\begin{definition}
    Let $f \in \cat(S_{\mathbb{T}},X)$, $\sigma$ be a symbol and $i \in |\sigma|$.
    Define $f^{\sigma,i} \in \cat(S_{\mathbb{T}},X)$ as, for a given $s \in S_{\mathbb{T}}$,
    constructing a new tree $s'$ such that
    \[o_{\sigma}(s')=i,\ \partial_{\sigma}(s')=s\]
    and applying $f$ to $s'$.
    Concretely, $f^{\sigma,i}(s) \coloneqq f(^{\sigma,i}s)$,
    where $^{\sigma,i}s$ is the state such that
    \[o_{\sigma}(^{\sigma,i}s)=i,\  \partial_{\sigma}(^{\sigma,i}s)=s\]
    and for other symbols $\tau$,
    \[o_{\tau}(^{\sigma,i}s)=o_{\tau}(s),\ \partial_{\tau}(^{\sigma,i}s)=\partial_{\tau}(s).\]

    For a sequence $p=(\sigma_1,i_1)\cdots(\sigma_n,i_n)$,
    $f^p$ denotes the map $(f^{\sigma_1,i_1})^{\cdots \sigma_n,i_n}$ and
    $^ps$ denotes the state $^{\sigma_1,i_1}(^{\cdots \sigma_n,i_n}s)$
    (if $p=\epsilon$ then $f^{\epsilon} \coloneqq f$ and $^{\epsilon}s \coloneqq s$).
    Then $f^p(s)=f(^ps)$.
\end{definition}

\begin{remark}
    For $f \in \cat(S_{\mathbb{T}},X)$, the function $f^{\sigma,i}$ is indeed continuous on sub-basis
    because for a sub-basis $U$ of $X$,
    $s \in (f^{\sigma,i})^{-1}(U) \Leftrightarrow s^{\sigma,i} \in f^{-1}(U)$
    and $s^{\sigma,i}$ behaves much like $s$.
    The nontrivial case is when $f^{-1}(U)=[\sigma(\lambda i.t_i) \mapsto v]$,
    but we can easily see that $\sem{\sigma(\lambda i.t_i)}(s^{\sigma,i})=\sem{t_i}(s)$.
    Thus, in this case, $s \in (f^{\sigma,i})^{-1}(U) \Leftrightarrow s \in [t_i \mapsto v]$.
\end{remark}

\begin{definition}
    For $f \in \cat(S_{\mathbb{T}},X)$ and for a term $t$,
    we define $[f,t] \in \cat(S_{\mathbb{T}},X)$ inductively:
    \begin{align}
        [f,v] & \coloneqq f \\
        [f,\sigma(\lambda i.t_i)] & \coloneqq \mathope{split}_{\sigma}(\lambda i.[f^{\sigma,i},t_i])
    \end{align}
\end{definition}

\begin{example}
    \label{exa:2}
    When $t=\sigma_1(\lambda k.\sigma_2(\lambda l.(k,l)))$,
    for a given state $s$ such that
    $o_{\sigma_1}(s)=k_0$ and $o_{\sigma_2}(\partial_{\sigma_1}(s))=l_0$,
    \begin{align}
        [f,t](s)
        &=[f^{(\sigma_1,k_0)},\sigma_2(\lambda l.(k_0,l))](\partial_{\sigma_1}(s))\\
        &=[f^{(\sigma_1,k_0)(\sigma_2,l_0)},(k_0,l_0)](\partial_{\sigma_2}(\partial_{\sigma_1}(s)))\\
        &=f^{(\sigma_1,k_0)(\sigma_2,l_0)}(\partial_{\sigma_2}(\partial_{\sigma_1}(s)))\\
        &=f^{(\sigma_1,k_0)}(^{(\sigma_2,l_0)}\partial_{\sigma_2}(\partial_{\sigma_1}(s)))\\
        &=f(^{(\sigma_1,k_0)(\sigma_2,l_0)}\partial_{\sigma_2}(\partial_{\sigma_1}(s))).
    \end{align}
    The state $s' \coloneqq {^{(\sigma_1,k_0)(\sigma_2,l_0)}\partial_{\sigma_2}(\partial_{\sigma_1}(s))}$ has
    the same behavior as $s$ on $t$ i.e.
    $o_{\sigma_1}(s')=o_{\sigma_1}(s)=k_0$ and $o_{\sigma_2}(\partial_{\sigma_1}(s'))=o_{\sigma_2}(\partial_{\sigma_1}(s))=l_0$.
    Additionally, $s'$ behaves completely in the same way as $s$ after $t$, i.e.,
    $\partial_{\sigma_2}(\partial_{\sigma_1}(s'))=\partial_{\sigma_2}(\partial_{\sigma_1}(s))$.
\end{example}

When we write the state constructed by $[f,t^{(i)}]$ from a state $s$ as $\lrangle{t^{(i)},s}$,
we can show that $[f,t^{(i)}](s) = f(\lrangle{t^{(i)},s})$ and $f(\lrangle{t^{(i)},s}) = f(s)$.
The former is shown by easy induction on $t^{(i)}$ (we have to describe $\lrangle{t^{(i)},s}$ concretely).
The latter is established as follows (the formal proof is too long to describe here).
\begin{enumerate}[label=$\cdot$]
    \item Write $f(s)$,$f(\lrangle{t^{(i)},s}) \in I \times X$ as $(j_0,x_0)$,$(j_1,x_1)$.
    \item Since $f$ is continuous on sub-basis, the index $j_0$ and behaviors of $x_0$ as a state of $\mathbb{T'}$-comodel $\bm{X}$ are
    determined by behaviors of $s$. Similarly, $j_1$ and behaviors of $x_1$ are determined by $\lrangle{t^{(i)},s}$.
    (That is, if $f^{-1}(\iota_{i_0}(X))=[t \mapsto v]$ then $j_0=i_0 \Leftrightarrow s \in [t \mapsto v]$.
    Behaviors of $x_0$ is examined by observing whether $x \in [t' \mapsto v']$ for some $t',v'$
    and its inverse image $f^{-1}([t' \mapsto v'])$ is also of the form $[t'' \mapsto v'']$.)
    \item With effort, we can show that required behaviors of $s$ and $\lrangle{t^{(i)},s}$ are
    those on $t^{(i)}$ (i.e., we should ask values of $\sem{t^{(i)}}(s)$ and $\sem{t^{(i)}}(\lrangle{t^{(i)},s})$)
    or those after $t^{(i)}$ (i.e., we should observe $\sem{t}(s)$ and $\sem{t}(\lrangle{t^{(i)},s})$ for some $t$ compatible with $t^{(i)}$).
    \item Behaviors of $\lrangle{t^{(i)},s}$ are completely the same as those of $s$ on $t^{(i)}$ and after $t^{(i)}$,
    as described in Example \ref{exa:2}.
    \item Therefore, $j_0=j_1$ and behaviors of $x_0$ and $x_1$ are completely the same.
    \item Since $\bm{X}$ is a simple comodel, the behavior determines the state
    (if $\bm{X}$ is not simple, there may be different states with the same behavior).
\end{enumerate}
Now we get the following lemma:

\begin{lemma}
    \label{lem:1}
    Let $f \in \mathword{Sub}(S_{\mathbb{T}},I \cdot X)$ such that
    $I_f \coloneqq \{i \in I \mid \iota_i(X) \cap f(S_{\mathbb{T}}) \neq \emptyset\}$ has at least two elements.
    Take an index $i \in I_f$,
    an appropriate $\mathbb{T}$-term $t^{(i)}$ and a variable $v^{(i)}$ as $f^{-1}(\iota_i(X)) = [t^{(i)} \mapsto v^{(i)}]$.
    Then $f = [f,t^{(i)}]$.
\end{lemma}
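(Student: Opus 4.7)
The plan is to verify the pointwise identity $[f, t^{(i)}](s) = f(s)$ for every $s \in S_{\mathbb{T}}$. I introduce an auxiliary state $\lrangle{t, s}$ by recursion on $t$: set $\lrangle{v, s} \coloneqq s$ when $v$ is a variable, and
\begin{equation}
\lrangle{\sigma(\lambda j.t_j),\, s} \coloneqq {}^{\sigma, o_\sigma(s)}\lrangle{t_{o_\sigma(s)},\, \partial_\sigma(s)}
\end{equation}
otherwise. A direct induction on $t$, unfolding the definitions of $\mathope{split}_\sigma$, $[f,\_]$ and $f^{\sigma,j}$, then gives $[f, t](s) = f(\lrangle{t, s})$, so it suffices to show $f(\lrangle{t^{(i)}, s}) = f(s)$ for every $s$.

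To prove this equality, I write $f(s) = (j_0, x_0)$ and $f(\lrangle{t^{(i)}, s}) = (j_1, x_1)$, using the embedding $I \cdot \bm{X} \hookrightarrow \bm{T'}(I \times X)$ of Lemma~\ref{lem:bimodel}. A second easy induction on $t$ establishes the invariance $\sem{t}^{\bm{S}_{\mathbb{T}}}(\lrangle{t, s}) = \sem{t}^{\bm{S}_{\mathbb{T}}}(s)$, so in particular $s \in [t^{(i)} \mapsto v^{(i)}]$ iff $\lrangle{t^{(i)}, s} \in [t^{(i)} \mapsto v^{(i)}]$; equivalently, $j_0 = i$ iff $j_1 = i$. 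Because $\bm{X}$ is simple as a $\mathbb{T'}$-comodel, the remaining task is to show $j_0 = j_1$ also when both differ from $i$, and to show that $x_0, x_1$ are operationally equivalent. Both pieces follow as soon as one knows that $s$ and $\lrangle{t^{(i)}, s}$ lie in exactly the same sub-basic sets $[t' \mapsto w']$ of $S_{\mathbb{T}}$ whenever $[t' \mapsto w']$ arises as an $f$-preimage of a sub-basic open in $I \cdot \bm{X}$: for $j_0 = j_1$, apply this to $t' = t^{(j_0)}$ (well defined since $j_0 \in I_f$); for operational equivalence of $x_0, x_1$, apply it to $f^{-1}(\iota_{j_0}([u \mapsto w]_{\bm{X}}))$ as $u \in T'(W)$ and $w \in W$ vary.

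The hard part is that final invariance claim. The intuition, spelled out in the bullet list preceding the lemma, is that $\lrangle{t^{(i)}, s}$ is constructed precisely so as to imitate $s$ along the single path through $t^{(i)}$ chosen by $s$, and to carry $s$'s continuation verbatim past the leaf reached there; the observations relevant to $f$ factor through exactly these pieces of the behaviour of $s$, because the partition $\{f^{-1}(\iota_j(X))\}_{j \in I_f}$ refines the operational sub-basis in a way that forces every relevant pulled-back term $t'$ to either follow the same path that $s$ takes through $t^{(i)}$ or to branch off only below a leaf of $t^{(i)}$. Making this rigorous, by induction on $t'$ and using continuity of $f$ on sub-basis together with disjointness of the $\iota_j$'s, is the main obstacle. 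Once the invariance is established, simplicity of $\bm{X}$ collapses operational equivalence of $x_0$ and $x_1$ to equality, yielding $f(\lrangle{t^{(i)}, s}) = f(s)$ and hence $f = [f, t^{(i)}]$.
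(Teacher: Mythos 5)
Your proposal follows the paper's own argument essentially verbatim: the paper likewise factors the claim into $[f,t^{(i)}](s)=f(\langle t^{(i)},s\rangle)$ (an easy induction on the term once $\langle t^{(i)},s\rangle$ is described concretely) and $f(\langle t^{(i)},s\rangle)=f(s)$ (via continuity on sub-basis, agreement of the behaviours of $s$ and $\langle t^{(i)},s\rangle$ on and after $t^{(i)}$, and simplicity of $\bm{X}$ to upgrade operational equivalence of $x_0,x_1$ to equality). The step you single out as the main obstacle --- that $s$ and $\langle t^{(i)},s\rangle$ lie in the same sub-basic sets $[t'\mapsto w']$ arising as $f$-preimages --- is precisely the step the paper also leaves at the level of a sketch (``the formal proof is too long to describe here''), so your write-up matches the source in both structure and level of rigour.
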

Moreover, by definition, $[f,t^{(i)}]$ consists of $\mathope{split}_{\sigma}$'s and $f^p$'s ($p \in Path(t^{(i)})$).
\begin{lemma}
    \label{lem:2}
    In the situation of Lemma \ref{lem:1}, maps $f^p$'s appearing in $[f,t^{(i)}]$ satisfy $I_{f^p} \subsetneq I_f$
    (where $I_f \coloneqq \{i \in I \mid \iota_i(X) \cap f(S_{\mathbb{T}}) \neq \emptyset\}$).
\end{lemma}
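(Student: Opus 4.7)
The plan is to exploit that each $f^p$ appearing in the unfolding of $[f,t^{(i)}]$ is indexed by a complete root-to-leaf path $p$ of $t^{(i)}$, and that the state $^p s$ is constructed precisely so that $\sem{t^{(i)}}^{\bm{S}_{\mathbb{T}}}(^p s)$ always follows $p$ to its terminal leaf, independently of $s$. This will pin down whether $f^p(s) \in \iota_i(X)$ uniformly in $s$, and thereby let us control $I_{f^p}$.

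First I would establish the non-strict inclusion $I_{f^p} \subseteq I_f$, which is immediate from $f^p(S_{\mathbb{T}}) = \{f(^p s) : s \in S_{\mathbb{T}}\} \subseteq f(S_{\mathbb{T}})$. For strictness, write $p = (\sigma_1,i_1) \cdots (\sigma_n,i_n)$ and let $w_p$ be the variable labelling the leaf of $t^{(i)}$ at which $p$ terminates. A straightforward induction on $n$ shows that $o_{\sigma_k}(\partial_{\sigma_{k-1}} \cdots \partial_{\sigma_1}(^p s)) = i_k$ for each $k = 1, \ldots, n$, so that by the defining recursion of $\sem{\cdot}^{\bm{S}_{\mathbb{T}}}$ on a free theory, $\sem{t^{(i)}}^{\bm{S}_{\mathbb{T}}}(^p s) = (w_p, s'')$ for some $s''$. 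Hence $^p s \in [t^{(i)} \mapsto w_p]$ for every $s$. Two cases then arise. If $w_p = v^{(i)}$, then $^p s \in [t^{(i)} \mapsto v^{(i)}] = f^{-1}(\iota_i(X))$ for every $s$, so $I_{f^p} \subseteq \{i\}$; since $|I_f| \geq 2$ by hypothesis, this is strict, giving $I_{f^p} \subsetneq I_f$. If $w_p \neq v^{(i)}$, the sets $[t^{(i)} \mapsto w_p]$ and $[t^{(i)} \mapsto v^{(i)}]$ are disjoint (because $\sem{t^{(i)}}^{\bm{S}_{\mathbb{T}}}(s)$ has a single well-defined first component), so $^p s \notin f^{-1}(\iota_i(X))$ and $i \in I_f \setminus I_{f^p}$, again forcing $I_{f^p} \subsetneq I_f$. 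Note that the degenerate subcase $t^{(i)} = v^{(i)}$ (which would leave $f^{\epsilon} = f$ and break strictness) cannot occur: it would give $f^{-1}(\iota_i(X)) = S_{\mathbb{T}}$, contradicting $|I_f| \geq 2$.

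The main obstacle will be the bookkeeping in the inductive claim that $^p s$ makes $\sem{t^{(i)}}^{\bm{S}_{\mathbb{T}}}$ follow $p$: one must verify that each prepended constructor $^{\sigma_k,i_k}$ remains exposed to the correct branch after the earlier $\partial_{\sigma_j}$'s have stripped off the outer layers, and that adjustments at symbols $\sigma_j \neq \sigma_k$ do not perturb the output $o_{\sigma_k}$ at the $k$-th step. This is a direct but slightly tedious unfolding of the definition of $^{\sigma,i}(-)$, along the lines already illustrated in Example~\ref{exa:2}.
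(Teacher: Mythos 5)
Your proof is correct and follows essentially the same route as the paper, which only records the intuition that once the computation reaches $f^p$ for a complete path $p$ of $t^{(i)}$, membership in $f^{-1}(\iota_i(X))$ is already decided, so that $I_{f^p}=\{i\}$ or $i \notin I_{f^p}$; your case split on whether the leaf variable $w_p$ equals $v^{(i)}$ is exactly this dichotomy, made precise via the (correct) computation $\sem{t^{(i)}}(^p s)=(w_p,\,\cdot\,)$. The only nitpick is that in ruling out the degenerate case you should also note that $t^{(i)}$ being a variable different from $v^{(i)}$ would force $f^{-1}(\iota_i(X))=\emptyset$, contradicting $i \in I_f$.
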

Intuition is that when calculation reaches $f^p$,
it has already identified whether a given state belongs to $f^{-1}(\iota_i(X))$ or not.
Thus $I_{f^p}=\{i\}$ or $i \notin I_{f^p}$.

\vspace{8pt}

So far, we completed the part (v) of the overview.
If $I_{f}$ is finite, by induction,
we can express $f$ by using operations $\mathope{split}_{\sigma}$ and
maps $g$ such that $g:S_{\mathbb{T}} \rightarrow \iota_{i_g}(X)$ for an index $i_g \in I$,
and conclude $f \in M$ (complete (vi) of the overview).

It remains to consider the case that $I_{f}$ is infinite.
First, specify the induction in the finite case (we assume $I \neq \emptyset$ for simplicity):
\begin{enumerate}[label=(\arabic*)]
    \item If $|I_{f}|>1$ then we choose an index $l_0 \in I_f$.
    \item Take a term $t_0$ and a variable $v_0$ as $f^{-1}(\iota_{l_0}(X))=[t_0 \mapsto v_0]$.
    \item Write $f$ as $[f,t_0]$.
    \item Consider $f^p$'s, $p \in \mathope{Path}(t_0)$ (maps appearing in $[f,t_0]$).
    \item By Lemma \ref{lem:2}, for all $p \in \mathope{Path}(t_0)$, we have $I_{f^p} \subsetneq I_{f}$ and they are not empty.
    \item If $|I_{f^p}|=1$ for all $p$ then the induction is finished.
    \item If there are paths $p$ with $|I_{f^p}|>1$,
    we apply this procedure for such maps $f^p$ instead of $f$
    (take $l_1^p \in I_{f^p}$, let $(f^p)^{-1}(\iota_{l_1^p}(X))=[t_1^p \mapsto v_1^p]$, etc.).
\end{enumerate}

We call this procedure the {\it splitting procedure} of $f$.
Even when $I_f$ is infinite,
if we reach the situation such that all maps $f^p$ appearing in the splitting procedure
satisfy the condition that $I_{f^p}$ is finite, then we can conclude $f \in M$.
Thus our aim is to show that we can always reach this situation, by contradiction.

Suppose that we cannot reach this situation.
Then the splitting procedure continues infinitely and
we get a infinite sequence $(\sigma_1,i_n)(\sigma_2,i_2) \cdots$ such that
for all $n \geq 1$, the set $I_{f^{(\sigma_1,i_1)\cdots(\sigma_n,i_n)}}$ is infinite.
Since this sequence is given by the splitting procedure of $f$,
there exists the unique natural number $n_1$ such that the sequence
$p_1 \coloneqq (\sigma_1,i_1) \cdots (\sigma_{n_1},i_{n_1}) \in \mathope{Path}(t_0)$
and we apply (7) to the map $f^{p_1}$; take appropriate $l_1 \in I$ (especially $l_1 \neq l_0$),
$t_1$, $v_1$ as $f_0^{-1}(\iota_{l_1}(X))=[t_1 \mapsto v_1]$
then we get the sequence $p_2 \coloneqq (\sigma_{n_1+1},i_{n_1+1}) \cdots (\sigma_{n_2},i_{n_2}) \in \mathope{Path}(t_1)$
for the unique $n_2$.
Therefore we are in the following situation:
\begin{enumerate}[label=]
    \item $\exists$infinite sequence of natural numbers $0=n_0 < n_1 < n_2 < \cdots$,
    \item $\forall j \geq 0$, $\exists l_j \in I$ (different from each other), $\exists$term $t_j$, $\exists$variable $v_j$,
    \item $p_{j+1} \coloneqq (\sigma_{n_j+1},i_{n_j+1})\cdots(\sigma_{n_{j+1}},i_{n_{j+1}}) \in \mathope{Path}(t_j)$,\label{property:2}
    \item $(f^{q_j})^{-1}(\iota_{l_j}(X))=[t_j \mapsto v_j]\neq \emptyset,S_{\mathbb{T}}$, \label{property:3}
    \item $f^{q_j}=[f^{q_j},t_j]$, \label{property:4}
    \item $I_{f^{q_j}}$ is infinite. \label{property:1}
    \item $(q_j \coloneqq (\sigma_1,i_1)\cdots(\sigma_{n_j},i_{n_j})=p_1 \cdots p_j,$
    $q_0=p_0 \coloneqq \epsilon.)$
\end{enumerate}

If a state $s$ satisfies $\mathope{path}(t_j,s)=p_{j+1}$ then $\sem{t_j}(s) \neq v_j$
i.e. $f^{q_j}(s) \notin \iota_{l_j}(X)$:
because if $\sem{t_j}(s)=v_j$ for this $s$, then for all state $s'$,
$\sem{t_j}(^{p_{j+1}}s')=v_j$ and this implies
$f^{q_{j+1}}(s')=f^{q_j \cdot p_{j+1}}(s)=f^{q_j}(^{p_{j+1}}s') \in \iota_{l_j}(X)$,
contradicts to the assumption that $I_{f^{q_{j+1}}}$ is infinite.

By extending this argument, for a state $s' \coloneqq {^{q_{j+1}}s}$ ($s$ is an arbitrary state),
we can show that $f(s')$ does not belong to any of $\iota_{l_0}(X), \ldots \iota_{l_j}(X)$.
For example, when $j=1$, then $s'={^{p_1 \cdot p_2}s}$ and
it is clear that $f(s') \notin \iota_{l_0}(X)$ by $\mathope{path}(t_0,s')=p_1$;
for $\iota_{l_1}(X)$, we can calculate that $f(s')=[f,t_0](s')=f^{p_1}(^{p_2}s)=f^{q_1}(^{p_2}s)$,
thus $f(s') \in \iota_{l_1}(X)$ iff $f^{q_1}(^{p_2}s) \in \iota_{l_1}(X)$ but the latter is false.

Then if we can construct a state $s'$ such as ${{^{p_1 \cdot p_2 \cdots}}s}$ (this notation is informal),
we expect that $f(s')$ does not belong to $\iota_{l_j}(X)$ for all $j$.
Assume we have such a state $s'$, and let $f(s') \in \iota_l(X)$ for $l \in I$,
which satisfies $l \neq l_j$ for all $j$.
By the construction of the sequence $(\sigma_1,i_1)(\sigma_2,i_2)\cdots$,
we may have to observe an infinite behavior of a given state $s''$ to decide whether $f(s'') \in \iota_l(X)$.
This contradicts to the continuity of $f$ on sub-basis.

For lack of space, we only explain the construction of such a state and the outline of derivation of a contradiction.
\begin{definition}
    For any state $s \in S_{\mathbb{T}}$ and
    any infinite sequence $\overrightarrow{(\tau,i)}=(\tau_1,i_1)(\tau_2,i_2)\cdots$,
    where $\tau_k$ is a operation symbol in $\mathbb{T}$ and $i_k \in |\tau_k|$ for all $k$,
    we define a new state $^{\overrightarrow{(\tau,i)}}s$ as the function
    $\Sigma_{\mathbb{T}}^{\ast} \rightarrow \prod_{\sigma \in \Sigma_{\mathbb{T}}}|\sigma|$:
    \begin{equation}
        ^{\overrightarrow{(\tau,i)}}s(\overrightarrow{x}) \coloneqq
        \begin{cases}
            s(\epsilon)[\tau_k \rightarrow i_k] & (\overrightarrow{x}=\tau_1 \cdots \tau_{k-1})\\
            s(\overrightarrow{y}) & (\overrightarrow{x} = \tau_1\cdots\tau_{k-1}\overrightarrow{y},\ y_1 \neq \tau_k)\ (k\geq 1)
        \end{cases}
    \end{equation}
\end{definition}

We define the infinite sequence $\overrightarrow{(\tau,i)_k}$ for a natural number $k \geq 1$ and
a infinite sequence $\overrightarrow{(\tau,i)}=(\tau_n,i_n)_{n \geq 1}$ as:
\[
    \overrightarrow{(\tau,i)_k}  \coloneqq  (\tau_n,i_n)_{n \geq k}.
\]

\begin{lemma}
    For any state $s \in S_{\mathbb{T}}$ and
    any infinite sequence $\overrightarrow{(\tau,i)}$,
    \begin{equation}
        \label{eq:1}
        o_{\tau_1}(^{\overrightarrow{(\tau,i)}}s)=i_1,\ 
        \partial_{\tau_1}(^{\overrightarrow{(\tau,i)}}s)={^{\overrightarrow{(\tau,i)_2}}}s,
    \end{equation}
    and inductively, for each $k \geq 1$,
    \begin{align}
            o_{\tau_k}(\partial_{\tau_{k-1}}\cdots\partial_{\tau_1}(^{\overrightarrow{(\tau,i)}}s))&=i_k \label{eq:2} \\
            \partial_{\tau_{k}}\cdots\partial_{\tau_1}(^{\overrightarrow{(\tau,i)}}s)&=^{\overrightarrow{(\tau,i)_{k+1}}}s. \label{eq:3}
    \end{align}
\end{lemma}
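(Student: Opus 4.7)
The plan is to verify the equations \eqref{eq:1}, \eqref{eq:2}, \eqref{eq:3} directly from the explicit formula defining ${^{\overrightarrow{(\tau,i)}}}s$, using the description of the final comodel from Example \ref{final model of free theory} as $S_{\mathbb{T}} = (\prod_{\sigma \in \Sigma_{\mathbb{T}}}|\sigma|)^{\Sigma_{\mathbb{T}}^{\ast}}$, under which $o_{\sigma}(s) = \pi_{\sigma}(s(\epsilon))$ and $\partial_{\sigma}(s)(\overrightarrow{x}) = s(\sigma \cdot \overrightarrow{x})$. Under this identification, both sides of each equation are functions $\Sigma_{\mathbb{T}}^{\ast} \to \prod_{\sigma}|\sigma|$, so it suffices to check they agree on every argument $\overrightarrow{x}$.

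For the first equation of \eqref{eq:1}, the argument $\epsilon$ falls into the first case of the defining formula with $k=1$, giving ${^{\overrightarrow{(\tau,i)}}}s(\epsilon) = s(\epsilon)[\tau_1 \rightarrow i_1]$; projecting to the $\tau_1$-component yields $i_1$. For the second equation, I compute $\partial_{\tau_1}({^{\overrightarrow{(\tau,i)}}}s)(\overrightarrow{x}) = {^{\overrightarrow{(\tau,i)}}}s(\tau_1 \cdot \overrightarrow{x})$ and compare with ${^{\overrightarrow{(\tau,i)_2}}}s(\overrightarrow{x})$ by splitting on how $\overrightarrow{x}$ relates to the shifted path $\tau_2\tau_3\cdots$. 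If $\overrightarrow{x} = \tau_2\cdots\tau_{j}$ for some $j \geq 1$ (including $\overrightarrow{x}=\epsilon$ when $j=1$), then $\tau_1 \cdot \overrightarrow{x}$ is a prefix of the original path, so the first clause of each defining formula applies and both sides evaluate to $s(\epsilon)[\tau_{j+1} \rightarrow i_{j+1}]$. Otherwise $\overrightarrow{x} = \tau_2\cdots\tau_{j-1}\cdot\overrightarrow{y}$ with $y_1 \neq \tau_j$ for some $j \geq 2$, so $\tau_1 \cdot \overrightarrow{x}$ deviates from the original path at position $j$, and the second clause gives $s(\overrightarrow{y})$ on both sides.

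For the inductive equations \eqref{eq:2} and \eqref{eq:3}, I proceed by induction on $k$, with base case $k=1$ supplied by \eqref{eq:1}. For the inductive step, I assume \eqref{eq:3} holds at $k$, so that $\partial_{\tau_k}\cdots\partial_{\tau_1}({^{\overrightarrow{(\tau,i)}}}s) = {^{\overrightarrow{(\tau,i)_{k+1}}}}s$. Applying \eqref{eq:1} with the shifted sequence $\overrightarrow{(\tau,i)_{k+1}}$ in place of $\overrightarrow{(\tau,i)}$ yields $o_{\tau_{k+1}}({^{\overrightarrow{(\tau,i)_{k+1}}}}s) = i_{k+1}$ and $\partial_{\tau_{k+1}}({^{\overrightarrow{(\tau,i)_{k+1}}}}s) = {^{\overrightarrow{(\tau,i)_{k+2}}}}s$, which are exactly the statements at $k+1$.

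The only subtle point is the index bookkeeping in the case distinction for \eqref{eq:1}: one must check that prefixes and deviation points of the shifted path match those of the original path after prepending $\tau_1$. No deeper idea is needed, since the construction ${^{(\cdot)}}s$ was designed precisely so that one step of the $\mathbb{T}$-comodel structure reads off the first entry of the sequence and leaves the tail construction behind.
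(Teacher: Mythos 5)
Your proof is correct and is exactly the argument the paper intends: the paper omits the proof of this lemma entirely, but its phrasing (``and inductively, for each $k \geq 1$'') points to precisely your strategy of verifying \eqref{eq:1} pointwise from the explicit formula for $^{\overrightarrow{(\tau,i)}}s$ and the description $o_{\sigma}(s)=\pi_{\sigma}(s(\epsilon))$, $\partial_{\sigma}(s)=s(\sigma\_)$, then obtaining \eqref{eq:2} and \eqref{eq:3} by induction via the shifted sequence. The case analysis on whether $\overrightarrow{x}$ is a prefix of $\tau_2\tau_3\cdots$ or first deviates at position $j$ is complete, and the index bookkeeping checks out.
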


Consider the infinite sequence $\overrightarrow{(\sigma,i)}$ constructed by
the splitting procedure of $f_0$ and fix a state $s$.
When we abbreviate the composition $\partial_{\sigma_{n_j}} \cdots \partial_{\sigma_{n_1}}$
as $\partial_j$ ($\partial_0$ is the identity on $S_{\mathbb{T}}$),
then by the above lemma, for all $j$,
\begin{equation}
    \label{eq:8}
    \mathope{path}(t_j,\partial_j(^{\overrightarrow{(\sigma,i)}}s))=
    (\sigma_{n_j+1},i_{n_j+1}) \cdots (\sigma_{n_{j+1}},i_{n_{j+1}})
    =p_{j+1}.
\end{equation}
This implies $f(^{\overrightarrow{(\sigma,i)}}s) \notin \iota_{l_j}(X)$ for all $j$:
we can show by easy induction on $j$ that
$f(^{\overrightarrow{(\sigma,i)}}s)=f^{q_j}(\partial_j(^{\overrightarrow{(\sigma,i)}}s))$
for each $j$,
and then by \eqref{eq:8},
$f^{q_j}(\partial_j(^{\overrightarrow{(\sigma,i)}}s)) \notin \iota_{l_j}(X)$.

Let $l \in I$ be the index such that $f(^{\overrightarrow{(\sigma,i)}}s) \in \iota_l(X)$,
then $l \neq l_j$ for all $j$.
By the continuity of $f$, $f^{-1}(\iota_l(X))=[t \mapsto v]$ for some $t$ and $v$.
We show, by induction on $j \geq 1$, that each $q_j$ is a prefix of
the path $\mathope{path}(t,{^{\overrightarrow{(\sigma,i)}}s})$ and
therefore the length of $\mathope{path}(t,{^{\overrightarrow{(\sigma,i)}}s})$ is infinite,
this is irrational.

We have reached the contradiction.
Thus, even if $I_{f}$ is infinite,
the splitting procedure of $f$ comes down to the finite case
and therefore $f \in M$.
This implies $M=\cat(S_{\mathbb{T}},X)$,
we complete the proof of Theorem \ref{main thm}.

\vspace{10pt}

To complete the proof of that the functor $\cat(\bm{S}_{\mathbb{T}},\_)$ preserves copowers of
simple comodels (the part (II) of the overview),
we show the existence of the mediating morphism.
In the proof below, we use the fact that $M=\cat(S_{\mathbb{T}},X)$ we have shown above
and thus we assume $X \in \cat$ has a simple $\mathbb{T'}$-comodel structure.

\begin{proposition}[the existence of mediating morphism]
    Assume that $X \in \cat$ has a simple $\mathbb{T'}$-comodel structure.
    If there is a family of $\mathbb{T}$-model map
    \[(p_i:\cat(\bm{S}_{\mathbb{T}},X) \rightarrow \bm{Y})_{i \in I}\]
    then there is a $\mathbb{T}$-model map
    \[\overline{p}:\cat(\bm{S}_{\mathbb{T}},I \cdot X) \rightarrow \bm{Y}\]
    such that $p_i=\overline{p} \circ (\iota_i \circ (\_))$ for all $i \in I$.
\end{proposition}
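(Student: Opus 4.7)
The plan is to construct $\overline{p}$ explicitly via the splitting procedure developed in the proof of Theorem~\ref{main thm}, then verify it is a $\mathbb{T}$-model morphism with the required compatibility. Concretely, given $f \in \cat(\bm{S}_{\mathbb{T}}, I \cdot X)$, I would define $\overline{p}(f)$ by strong recursion on the depth of its splitting tree. In the base case, when $|I_f| \leq 1$, the map $f$ factors uniquely (when $S_{\mathbb{T}} \neq \emptyset$) as $\iota_i \circ f'$ for some $i \in I$ and $f' \in \cat(\bm{S}_{\mathbb{T}}, X)$, and I set $\overline{p}(f) := p_i(f')$. In the recursive case $|I_f| > 1$, I pick any $i \in I_f$ together with a witness $t^{(i)}, v^{(i)}$ of $f^{-1}(\iota_i(X)) = [t^{(i)} \mapsto v^{(i)}]$; by Lemma~\ref{lem:1}, $f = [f, t^{(i)}]$, and unfolding the identity $[g, \sigma(\lambda j. u_j)] = \mathope{split}_\sigma(\lambda j.[g^{\sigma, j}, u_j])$ expresses $f$ as a finite tree of $\mathope{split}_\sigma$-operations whose leaves are the maps $f^p$ for $p \in \mathope{Path}(t^{(i)})$, each satisfying $|I_{f^p}| < |I_f|$ by Lemma~\ref{lem:2}. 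I then set $\overline{p}(f)$ to the value obtained by evaluating this tree with each $\mathope{split}_\sigma$ replaced by $\sem{\sigma}_{\bm Y}$ and each leaf $f^p$ replaced by the inductively defined $\overline{p}(f^p)$. The termination argument from the end of the proof of Theorem~\ref{main thm} extends the recursion to the case of infinite $I_f$.

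The hard part is proving well-definedness of this construction, namely independence of the arbitrary choices of $i \in I_f$ and $t^{(i)}$. Well-definedness is equivalent to the $\mathbb{T}$-model-morphism property
\[
\overline{p}(\mathope{split}_\sigma(\lambda j. g_j)) = \sem{\sigma}_{\bm Y}(\lambda j. \overline{p}(g_j))
\]
holding for \emph{every} presentation $f = \mathope{split}_\sigma(\lambda j. g_j)$, so I would prove well-definedness and this morphism identity simultaneously by induction on splitting depth. The two key ingredients are the simplicity of $\bm X$, which (as in the proof of Theorem~\ref{main thm}) promotes behavioural equivalence on $S_{\mathbb{T}}$ to honest equality of states in $\bm X$ and thereby lets one compare any two splittings of the same $f$ by passing to a common refinement of the corresponding partitions $\{f^{-1}(\iota_j X)\}_{j}$; and the hypothesis that each $p_i$ is itself a $\mathbb{T}$-model morphism, which transports the refinement identities from $\cat(\bm{S}_{\mathbb{T}}, I \cdot X)$ into the corresponding identities in $\bm Y$.

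Once well-definedness and the morphism property are in hand, the compatibility $\overline{p} \circ (\iota_i \circ (\_)) = p_i$ is immediate, since for any $g \in \cat(\bm{S}_{\mathbb{T}}, X)$ the map $\iota_i \circ g$ falls into the base case of the definition with $I_{\iota_i \circ g} \subseteq \{i\}$. Combined with Theorem~\ref{main thm}, this delivers the full universal property of the copower, thereby completing step~(II) of the overview and yielding that $\cat(\bm{S}_{\mathbb{T}}, \bm{S}_{\mathbb{T'}})$ is the final $\mathbb{T}$-$\mathbb{T'}$-bimodel.
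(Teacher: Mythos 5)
Your construction is correct and rests on the same pillars as the paper's proof (Theorem \ref{main thm} for expressibility of every $f$ via the splitting procedure, together with Lemmas \ref{lem:1} and \ref{lem:2}), but it is packaged differently. The paper introduces the free $\mathbb{T}$-model $\bm{N}$ on the set $\cat(S_{\mathbb{T}},X)\times I$ and uses freeness to obtain two model maps: $\beta:\bm{N}\rightarrow\cat(\bm{S}_{\mathbb{T}},I\cdot X)$ with $\beta(f,i)=\iota_i\circ f$, and $p':\bm{N}\rightarrow\bm{Y}$ with $p'(f,i)=p_i(f)$. Since the image of $\beta$ is exactly the submodel $M$ of the proof of Theorem \ref{main thm}, that theorem says $\beta$ is surjective, and $\overline{p}$ is obtained by factoring $p'$ through $\beta$; the only obligation is that $\beta(x)=\beta(y)$ implies $p'(x)=p'(y)$, which the paper handles by induction on the total number of operation symbols in the terms $x,y$. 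Your recursion on the splitting procedure amounts to choosing one particular $\beta$-preimage of each $f$ and applying $p'$ to it, so your well-definedness obligation is precisely the paper's kernel-containment statement (specialized to procedure-generated presentations, plus the morphism identity for arbitrary ones). What the free-model packaging buys is bookkeeping: $\overline{p}$ is automatically a $\mathbb{T}$-model map once the factorization exists, well-definedness and the morphism property collapse into a single implication about formal terms, and the well-foundedness of the recursion for infinite $I_f$ is absorbed into the one citation of surjectivity rather than re-run inside the definition of $\overline{p}$. Both versions leave the key verification --- that two formal split-expressions denoting the same continuous map are identified in $\bm{Y}$ --- at the level of a sketch, so your proposal is not less rigorous than the paper's, only more laborious to set up; your parenthetical caveat about $S_{\mathbb{T}}\neq\emptyset$ flags a degenerate case that the paper silently ignores as well.
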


\begin{proof}
    Let $\bm{N}$ be the free $\mathbb{T}$-model on the set $\cat(S_{\mathbb{T}},X) \times I$.
    By freeness, we have the unique $\mathbb{T}$-model map $\beta:\bm{N} \rightarrow \cat(\bm{S}_{\mathbb{T}},X)$
    with $\beta(f,i)=\iota_i f$
    and the unique map $p':\bm{N} \rightarrow \bm{Y}$ with $p'(f,i)=p_i(f)$.
    It suffices to show that there is a factorization $p$ of $p'$ through $\beta$ i.e. $p'=p \circ \beta$.
    By the proof of the above theorem, $\beta$ is epimorphic.
    So, it suffices to show that if $x,y \in N$ satisfy $\beta(x)=\beta(y)$,
    then they satisfy $p'(x)=p'(y)$.
    We can do this by induction on the total number of operation symbols in $\mathbb{T}$-terms $x$, $y$.
\end{proof}

Now we have completed all of the steps described in the overview and reached the goal:

\begin{theorem}
    For free theories $\mathbb{T}$, $\mathbb{T'}$,
    the set $\cat(S_{\mathbb{T}},S_{\mathbb{T'}})$
    appears as the final $\mathbb{T}\-\mathbb{T'}$-bimodel
    with $\mathbb{T}$-model structure maps $(\mathope{split}_{\sigma})_{\sigma \in \Sigma_{\mathbb{T}}}$ and
    $\mathbb{T'}$-comodel structure maps
    \begin{equation}
        \cat(\bm{S}_{\mathbb{T}},S_{\mathbb{T'}})
        \xrightarrow{\sem{\tau}^{\bm{S}_{\mathbb{T'}}} \circ (\_)}
        \cat(\bm{S}_{\mathbb{T}},|\tau| \cdot S_{\mathbb{T'}})
        \xrightarrow{\cong}
        |\tau| \cdot \cat(\bm{S}_{\mathbb{T}},S_{\mathbb{T'}})
    \end{equation}
    for $\tau \in \Sigma_{\mathbb{T'}}$,
    where the first part is postcomposition with
    the $\mathbb{T'}$-comodel structure map $\sem{\tau}^{\bm{S}_{\mathbb{T'}}}$ of
    the final $\mathbb{T'}$-comodel $\bm{S}_{\mathbb{T'}}$
    and the second part is the canonical isomorphism coming from the fact that
    $\cat(\bm{S}_{\mathbb{T}},\_):\cat \rightarrow Mod(\mathbb{T})$
    preserves copowers.
\end{theorem}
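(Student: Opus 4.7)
The plan is to exploit the adjunction $(\_) \otimes \bm{S}_{\mathbb{T}} \dashv \cat(\bm{S}_{\mathbb{T}},\_) : \cat \to \mathword{Mod}(\mathbb{T})$ from Proposition \ref{prop:1}, combined with the key observation that $\cat(\bm{S}_{\mathbb{T}},\_)$ preserves copowers of objects in $\cat$ carrying a simple $\mathbb{T'}$-comodel structure. Once these two ingredients are secured, the $\mathbb{T'}$-comodel structure on $\cat(\bm{S}_{\mathbb{T}},\bm{S}_{\mathbb{T'}})$ is obtained for each $\tau \in \Sigma_{\mathbb{T'}}$ by post-composing with $\sem{\tau}^{\bm{S}_{\mathbb{T'}}}$ and applying the copower-preservation isomorphism. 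For finality, given any $\mathbb{T}$-$\mathbb{T'}$-bimodel $\bm{Y}$, the adjunction yields an isomorphism $\mathword{Comod}(\mathbb{T'},\mathword{Mod}(\mathbb{T}))(\bm{Y},\cat(\bm{S}_{\mathbb{T}},\bm{S}_{\mathbb{T'}})) \cong \mathword{Comod}(\mathbb{T'},\cat)(\bm{Y} \otimes \bm{S}_{\mathbb{T}},\bm{S}_{\mathbb{T'}})$, whose right-hand side is a singleton by Corollary \ref{cor:1}, since $\bm{S}_{\mathbb{T'}}$ endowed with $\Phi_{\mathbb{T'}}$ is final in $\mathword{Comod}(\mathbb{T'},\cat)$.

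The hard part is the copower-preservation statement, and within it, the uniqueness of the mediating morphism (Theorem \ref{main thm}). My approach would reduce this to showing that every $f \in \cat(S_{\mathbb{T}},I \cdot X)$ lies in the sub-$\mathbb{T}$-model $M$ generated by the images $\iota_i \circ (\_)$ under the operations $\mathope{split}_{\sigma}$. I would argue by induction on the cardinality of $I_f \coloneqq \{i \in I \mid \iota_i(X) \cap f(S_{\mathbb{T}}) \neq \emptyset\}$. The base case $|I_f| = 1$ is immediate. For the inductive step with $|I_f| \geq 2$, fix any $i \in I_f$; continuity of $f$ on sub-basis forces $f^{-1}(\iota_i(X)) = [t^{(i)} \mapsto v^{(i)}]$ for some term $t^{(i)}$ and variable $v^{(i)}$, and I would then prove the decomposition $f = [f,t^{(i)}]$, where $[f,t^{(i)}]$ is expressed through $\mathope{split}_{\sigma}$ and auxiliary maps $f^p$ indexed by paths $p \in \mathope{Path}(t^{(i)})$. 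The equality $f = [f,t^{(i)}]$ hinges on the fact that, for any $s$, the states $s$ and $\lrangle{t^{(i)},s}$ exhibit identical observable behaviour along and beyond $t^{(i)}$, so that simplicity of $\bm{X}$ collapses their images. A separate lemma shows $I_{f^p} \subsetneq I_f$, which drives the induction when $I_f$ is finite.

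The genuinely delicate step is handling infinite $I_f$. Here I would argue by contradiction: if iterating the splitting procedure never terminates with all residual maps having finite index sets, one extracts an infinite sequence $(\sigma_k, i_k)_{k \geq 1}$ of node-choices together with pairwise distinct indices $l_0, l_1, \ldots \in I$ and a matching sequence of terms $t_j$, variables $v_j$, and path segments $p_{j+1} \in \mathope{Path}(t_j)$. The plan is to assemble these into a single state $^{\overrightarrow{(\sigma,i)}}s \in S_{\mathbb{T}}$ whose $f$-image, by a pathwise computation using $\mathope{path}(t_j, \partial_j(^{\overrightarrow{(\sigma,i)}}s)) = p_{j+1}$, avoids $\iota_{l_j}(X)$ for every $j$. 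The unique $l \in I$ with $f(^{\overrightarrow{(\sigma,i)}}s) \in \iota_l(X)$ must then differ from all $l_j$, so $f^{-1}(\iota_l(X)) = [t \mapsto v]$ for some term $t$; induction on $j$ shows each prefix $q_j$ embeds into $\mathope{path}(t, {^{\overrightarrow{(\sigma,i)}}}s)$, forcing this path to be infinite, which is absurd.

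Finally, the existence half of copower preservation is a routine factorization: the free $\mathbb{T}$-model $\bm{N}$ on $\cat(S_{\mathbb{T}},X) \times I$ admits canonical maps to $\cat(\bm{S}_{\mathbb{T}},I \cdot X)$ and to $\bm{Y}$, and the epimorphicity established in the uniqueness argument lets one define the required $\overline{p}$; compatibility $p' = \overline{p} \circ \beta$ is checked by induction on the symbol-count of representatives in $\bm{N}$. Combining (I)–(IV) as sketched in the overview then yields the theorem.
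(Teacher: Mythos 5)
Your proposal is correct and follows essentially the same route as the paper: the four-step reduction via the adjunction of Proposition \ref{prop:1}, copower preservation for simple comodels split into uniqueness (the splitting procedure, induction on $|I_f|$, the decomposition $f=[f,t^{(i)}]$, and the infinite-case contradiction via an assembled limit state with an infinite path) and existence (factorization through the free model on $\cat(S_{\mathbb{T}},X)\times I$), concluding by finality of $\bm{S}_{\mathbb{T'}}$ in $\mathword{Comod}(\mathbb{T'},\cat)$. No substantive divergence from the paper's argument.
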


Finally, we describe the relation between the final residual comodel $I_{\mathbb{T},\mathbb{T'}}$ and the final bimodel $\cat(S_{\mathbb{T}},S_{\mathbb{T'}})$.
In section \ref{sec:4}, we constructed
$\textsf{reflect}:I_{\mathbb{T},\mathbb{T'}} \rightarrow \cat(S_{\mathbb{T}},S_{\mathbb{T'}})$.
We can also define a map of converse direction
$\textsf{reify}:\cat(S_{\mathbb{T}},S_{\mathbb{T'}}) \rightarrow I_{\mathbb{T},\mathbb{T'}}$.
This is because each $\mathbb{T}$-$\mathbb{T'}$-bimodel equips a $\mathbb{T}$-residual $\mathbb{T'}$-comodel structure.
We can show that the composition $\textsf{reflect} \circ \textsf{reify}$ is
the identity on $\cat(S_{\mathbb{T}},S_{\mathbb{T'}})$.
This establishes the complete representation of $\cat(S_{\mathbb{T}},S_{\mathbb{T'}})$ by $I_{\mathbb{T},\mathbb{T'}}$.
The argument is the same as \cite{garnerStreamProcessorsComodels2021}.

\section{Conclusion and Future Work}
\label{sec:8}

Our main contribution is giving a comodel-theoretic characterization of
$\cat(S_{\mathbb{T}},S_{\mathbb{T'}})$, which is a subset of
$\mathword{Top}(S_{\mathbb{T}},S_{\mathbb{T'}})$.
Explicitly, $\cat(S_{\mathbb{T}},S_{\mathbb{T'}})$ can
appear as the final $\mathbb{T}$-$\mathbb{T'}$-bimodel.
Additionally, we constructed maps
\[
    \begin{tikzpicture}[auto]
        \node (C) at (0,0) {$\cat(S_{\mathbb{T}},S_{\mathbb{T'}})$}; \node (D) at (3,0) {$I_{\mathbb{T},\mathbb{T'}}$,};
        \draw[<-, transform canvas={yshift=6pt}] (C) --node {$\textsf{reflect}$}(D);
        \draw[->, transform canvas={yshift=-6pt}] (C) --node[swap] {$\textsf{reify}$} (D);
    \end{tikzpicture}
\]
where $I_{\mathbb{T},\mathbb{T'}}$ is the underlying set of the final $\mathbb{T}$-residual $\mathbb{T'}$-comodel
and we identify $\cat(S_{\mathbb{T}},S_{\mathbb{T'}})$ with the final $\mathbb{T}$-$\mathbb{T'}$-bimodel.
Then we showed that the composition ${\textsf{reflect} \circ {\textsf{reify}}}$ is
the identity on $\cat(S_{\mathbb{T}},S_{\mathbb{T'}})$,
this implies the completeness of $\textsf{reflect}$.
A further generalization is required in order to give a complete representation of
$\mathword{Top}(S_{\mathbb{T}},S_{\mathbb{T'}})$.
In \cite{article}, Ghani et al. gave a coalgebraic representation of continuous functions
on final coalgebras of various functors but they did not show its completeness.
On the other hand, although we show a kind of complete correspondence,
this does not consider the whole set of continuous functions.

Our second contribution is an analysis of the final comodel of a free algebraic theory and
functions continuous on sub-basis from it.
During the proof in Section \ref{sec:6},
we defined several notions such as
the continuous function $f^{\sigma,i}$ and the state $^{\sigma,i}s$.
We expect that they and the ideas underlying them are useful when studying arbitrary continuous functions or
investigating the case of non-free algebraic theories.

According to \cite{lmcs:7712},
transducers with backtracking characterize continuous functions between the set of trees.
After submitting this article, we established a retraction between
appropriate transducers and the residual comodels.
We hope to report this result elsewhere.

When we try to generalize our argument to the case of non-free algebraic theories, there are many difficulties.
One is the question whether bimodels can be seen as residual comodels (the latter of Lemma \ref{lem:bimodel}).
This is a key point to define the map $\textsf{reify}$.
Of course, there can be unnoticed issues.
We should carefully analyse our proofs and
we would like to identify algebraic theories in which our argument is effective.

\section*{Acknowledgements}

I am grateful to Soichiro Fujii for suggesting the topic treated in this paper.
He and my supervisor Masahito Hasegawa helped me in many ways,
and discussion with them were very meaningful.
Finally, I would like to thank my advisors and colleagues for broadening and deepening my knowledge.

\bibliography{shuron1}

\end{document}